\newcommand{\K}{\ensuremath{\textit{K}}}
\newcommand{\Ag}{\ensuremath{\textbf{I}}}
\newcommand{\lr}[1]{\langle #1 \rangle}
\newcommand{\M}{\mathcal{M}}
\newcommand{\knowable}{\Diamond^\K}
\renewcommand{\phi}{\varphi}
\newcommand{\BP}{\textbf{P}}
\theoremstyle{definition}
\newcommand{\weg}[1]{}
\newcommand{\lra}{\leftrightarrow}
\newcommand{\EL}{\textbf{EL}}
\newcommand{\APAL}{\textbf{APAL}}
\newcommand{\PAL}{\textbf{PAL}}
\newcommand{\LUT}{\textbf{LUT}}
\newcommand{\ALUT}{\textbf{ALUT}}
\newcommand{\LUTone}{\textbf{LUT1}}
\newcommand{\LUTtwo}{\textbf{LUT2}}
\newcommand{\LEA}{\textbf{LEA}}
\newcommand{\LAEA}{\textbf{LAEA}}
\newtheorem{theorem}{Theorem}
\newtheorem{lemma}[theorem]{Lemma}
\newtheorem{definition}[theorem]{Definition}
\newtheorem{remark}[theorem]{Remark}
\newtheorem{proposition}[theorem]{Proposition}
\newtheorem{conjecture}[theorem]{conjecture}
\title{Unknown Truths and Unknowable Truths}
\author{Jie Fan\\
\small Institute of Philosophy, Chinese Academy of Sciences;\\
\small School of Humanities, University of Chinese Academy of Sciences  \\
\small \texttt{jiefan@ucas.ac.cn}}
\date{}
\begin{document}

\maketitle

\begin{abstract}
Notions of unknown truths and unknowable truths are important in formal epistemology, which are related to each other in e.g. Fitch's paradox of knowability. Although there have been some logical research on the notion of unknown truths and some philosophical discussion on the two notions, there seems to be no logical research on unknowable truths. In this paper, we propose a logic of unknowable truths, investigate the logical properties of unknown truths and unknowable truths, which includes the similarities of the two notions and the relationship between the two notions, and axiomatize this logic. 
\end{abstract}

Keywords: unkown truths, unknowable truths, axiomatization, Fitch's paradox of knowability

\section{Introduction}

This article investigates notions of unknown truths and unknowable truths. A proposition is an {\em unknown truth}, if it is true but unknown; a proposition is an {\em unknowable truth}, if it is true but unknowable. The two notions are related to each other in e.g. Fitch's paradox of knowability, which states that if there is an unknown truth, then there is an unknowable truth.

The notion of unknown truths is important in formal epistemology. For example, it is related to verification thesis, which says that all truths can be known. By the thesis and two uncontroversial principles of knowledge, it follows from the notion that all truths are actually known. In this way, the notion gives rise to a well-known counterexample to the verification thesis. This is the so-called `Fitch's paradox of knowability'. To take another example: the notion gives us an important type of Moore sentences, which is in turn crucial to Moore's paradox, saying that one cannot claim the paradoxical sentence ``$p$ but I do not know it''. A well-known result is that such a Moore sentence is unsuccessful and self-refuting~\cite{hollidayetal:2010,Fan:2019}.

The notion of unknowable truths is crucial in the dispute of realists and anti-realists. Anti-realists holds the aforementioned verificationist thesis, which states that all truths can be known, thereby denying the very possibility of unknowable truths. In contrast, realists believe that there are parts of reality, representable in some conceptually accessible language, that it is impossible for any agent ever to know. As guessed in~\cite[p.~119]{Routley:2010}, ``Perhaps, in truths that cannot be known lies `the mystery at the heart of things'.''

Note that the notion of unknowable truths also gives an important type of Moore sentences, since one cannot claim the paradoxical sentence ``$p$ is true but I cannot know it''. We can also show that such a Moore sentence is unsuccessful and self-refuting.

Some researchers in the literature discuss about whether unknowable truths exist or not. For example, Edgington~\cite{Edgington85} admits that there are
unknowable truths in the sense of Fitch's argument. Horsten~\cite{Horsten:1998} gives an interesting example of unknowable truths, by using a revision of the G\"{o}del sentence.\weg{\footnote{In detail, the revision of the  G\"odel sentence is as follows, which we denote G: 
\begin{center}
G is unknowable.
\end{center}
If G is knowable, then G is true, and thus G is unknowable, contradiction. So G is unknowable. But this is exactly what G says of itself, G must be true.
Thus G is an unknowable truth. 
Note that the argument presupposes the factivity of knowability, which however is disputable. For instance, in Routley~\cite[Fn.~27]{Routley:2010}, it is said that ``under an alethic interpretation of $\Diamond$, it is possible that contingent falsehoods are true and so, in certain cases logically possible that they are known.'' In contrast, Routley thought that the claim holds when $\Diamond$ (that is, `-able') is interpreted as a capacity modality.}}
Cook~\cite{cook2006knights} argues that even if it is impossible for it to be known of any particular sentence
that it is both true and unknowable, we may still know that there are
unknowable truths.\footnote{However, it is shown in~\cite{duke2006unknowable} to be unsuccessful since the argument depends on a paradoxical reasoning.} Besides, it is argued in~\cite[Sec.~2]{Routley:2010} that there are unknowable truths, that is, necessary limits to knowledge. As we will show below, under our interpretation of knowability, there are indeed unknowable truths. Thus our interpretation is in line with the aforementioned philosophical discussion.

In the literature, there have been several interpretations for the notion of knowability. For instance, in~\cite{jfak.fitch:2004,balbianietal:2008}, knowability means `known after an announcement'; in~\cite{Deanetal:2010}, knowability means the existence of proofs; in~\cite{Fara:2010}, knowability means capabilities to know; \weg{in~\cite{Edgington:2010}, it is mentioned that the notion is plausible to be restated as `it is possible that there is evidence' or `it is possible that it is reasonably believed'; }in~\cite{Routley:2010}, it is interpreted as a capacity-knowledge modality `it can be known that';  in~\cite{Wenetal:2011}, knowability means `known after an information update'; whereas in~\cite{Holliday:2018}, knowability means ` dynamically possible knowledge of what was true before the update of the epistemic state of the agent'. In this paper, we follow~\cite{jfak.fitch:2004,balbianietal:2008} to interpret `$\phi$ is knowable' as `$\phi$ is known after some truthful public announcement', so that `$\phi$ is an unknowable truth' is interpreted as `$\phi$ is true and after any truthful public announcement, $\phi$ is unknown'.

Unknowable truths (and unknown truths) is a subjective concept, since it is possible that a proposition is an unknowable truth (an unknown truth, resp.) for an agent but not for another. For instance, consider the true proposition ``it is raining but Ann does not know it''. This proposition is an unknowable truth (an unknown truth, resp.) for Ann, but is {\em not} an unknowable truth (an unknown truth, resp.) for another agent Bob, who may be aware of Ann's ignorance. Therefore we move to the multiagent cases.

Although there have been some logical research on the notion of unknown truths~(see e.g.~\cite{GilbertVenturi:2017,Fan:2022neighborhood}), and also some philosophical discussion on the two notions~\cite{fitch:1963,cook2006knights,duke2006unknowable,Routley:2010}, there seems to be no logical research on unknowable truths. For instance, there have been no axiomatization on the notion of unknowable truths. For another example, it is unnoticed that there are many similarities and relationship between the notions of unknown truths and unknowable truths. In this paper, we propose a logic of unknowable truths,
investigate the logical properties of unknown truths and unknowable truths, which include the similarities and relationship between the two notions, and axiomatize the logic of unknowable truths.

\weg{There are unknowable truth!
Consider the following revison of the  G\"odel sentence, which we denote G:

G is unknowable.

If G is knowable, then G is true,\footnote{This claim is disputable. For instance, in Routley~\cite[Fn.~27]{Routley:2010}, it is held that under an alethic interpretation of `-able', `it is possible that contingent falsehoods are true and so, in certain cases logically possible that they are known. In contrast, Routley thought that the claim holds when `-able' is interpreted as a capacity modality.} and thus G is unknowable, contradiction. So G is unknowable. But this is exactly what G says of itself, G must be true.
Thus G is an unknowable true. (LEON HORSTEN, A Kripkean Approach
to Unknowability and Truth, Notre Dame Journal of Formal Logic
Volume 39, Number 3, Summer 1998)}


\weg{Some researchers in the literature discuss about whether unknowable truths exist or not. For example,  Dorothy Edgington~\cite{Edgington85} admits/concedes that there are
unknowable truths in the sense of Fitch's argument. Horsten~\cite{Horsten:1998} gives an interesting example of unknowable truths, by using a revision of the G\"{o}del sentence.\footnote{In detail, the revision of the  G\"odel sentence is as follows, which we denote G: 
\begin{center}
G is unknowable.
\end{center}
If G is knowable, then G is true, and thus G is unknowable, contradiction. So G is unknowable. But this is exactly what G says of itself, G must be true.
Thus G is an unknowable true. 
Note that the argument presupposes the factivity of knowability, which however is disputable. For instance, in Routley~\cite[Fn.~27]{Routley:2010}, it is said that ``under an alethic interpretation of $\Diamond$, it is possible that contingent falsehoods are true and so, in certain cases logically possible that they are known.'' In contrast, Routley thought that the claim holds when $\Diamond$ (that is, `-able') is interpreted as a capacity modality.}
Cook~\cite{cook2006knights} argues that even if it is impossible for it to be known of any particular sentence
that it is both true and unknowable, we may still know that there are
unknowable truths.\footnote{However, it is shown in~\cite{duke2006unknowable} to be unsuccessful since the argument depends on a paradoxical reasoning.} Besides, it is argued in~\cite[Sec.~2]{Routley:2010} that there are unknowable truths, that is, necessary limits to knowledge. As we will show below, under our interpretation of knowability, there are indeed unknowable truths. Thus our interpretation is in line with the aforementioned philosophical discussion.

Although there have been some logical research on the notion of unknown truths~(see e.g.~\cite{GilbertVenturi:2017,Fan:2022neighborhood}), and also some philosophical discussion on the two notions~\cite{fitch:1963,cook2006knights,duke2006unknowable,Routley:2010}, there seems to be no logical research on unknowable truths. For instance, there have been no axiomatization on the notion of unknowable truths. For another example, it is unnoticed that there are many similarities and relationship between the notions of unknown truths and unknowable truths. In this paper, we propose a logic of unknowable truths,
investigate the logical properties of unknown truths and unknowable truths, which include the similarities and relationship between the two notions, axiomatize the logic of unknowable truths.}

The remainder of the article is organized as follows. Section~\ref{sec.language} introduces the language and semantics of the logic of unknowable truths. Section~\ref{sec.logicalproperties} investigates logical properties of unknown truths and unknowable truths. Section~\ref{sec.soundness} axiomatizes the logic of unknowable truths and shows its soundness. Section~\ref{sec.completeness} demonstrates the completeness. We conclude with some discussion in Section~\ref{sec.conclusion}.

\section{Syntax and Semantics}\label{sec.language}

Throughout the paper, we let $\BP$ be the denumerable set of propositional variables, and let $\Ag$ be a finite set of  agents. 
\begin{definition} Where $p\in\BP$, $i\in\Ag$, the language of the logic of unknowable truths, denoted $\LUT$, is defined recursively as follows:
\[\begin{array}{lll}
\weg{\LEA & \phi::= & p\mid \neg\phi\mid (\phi\land\phi)\mid \bullet_i\phi\\
 \textbf{EL} & \phi::= & p\mid \neg\phi\mid (\phi\land\phi)\mid \K_i\phi\\
\LAEA & \phi::= & p\mid \neg\phi\mid (\phi\land\phi)\mid \bullet_i\phi\mid [\phi]\phi\\
 \textbf{PAL}& \phi::= & p\mid \neg\phi\mid (\phi\land\phi)\mid \K_i\phi\mid [\phi]\phi\\
 \LUTone    &\phi::= & p\mid \neg\phi\mid (\phi\land\phi)\mid \K_i\phi\mid [\phi]\phi\mid U_i^1\phi \\
 \LUTtwo    &\phi::= & p\mid \neg\phi\mid (\phi\land\phi)\mid \bullet_i\phi\mid [\phi]\phi\mid U^2_i\phi \\ }
  \LUT    &\phi::= & p\mid \neg\phi\mid (\phi\land\phi)\mid \K_i\phi\mid [\phi]\phi\mid U_i\phi \\
\end{array}\]
\end{definition}

$\K_i\phi$ is read ``agent $i$ knows that $\phi$'', $[\psi]\phi$ is read ``after every truthfully public announcement of $\psi$, it is the case that $\phi$'', and $U_i\phi$ is read ``$\phi$ is an unknowable truth for $i$''. Other connectives are defined as usual; in particular, $\bullet_i\phi$, read ``$\phi$ is an unknown truth'', abbreviates $\phi\land\neg\K_i\phi$, and $\lr{\psi}\phi$ abbreviates $\neg[\psi]\neg\phi$. 
Without the construct $U_i\phi$, we obtain  public announcement logic $\PAL$; without the construct $[\phi]\phi$ further, we obtain epistemic logic $\EL$.
The language $\LUT$ is interpreted on models.
\begin{definition}
    A {\em model} $\M$ is a triple $\lr{S,\{R_i\mid i\in\Ag\},V}$, where $S$ is a nonempty set of states, each $R_i$ is the accessibility relation for $i$, and $V$ is a valuation. We assume that every $R_i$ is reflexive.\footnote{Note that here for simplicity, we only consider the minimal restriction of knowledge. We can also consider extra properties of knowledge, for instance, transitivity, Euclidicity, etc., in which cases we add the corresponding axioms for knowledge in the axiomatization.} A {\em frame} is a model without valuations. A {\em pointed model} is a pair of a model and a state in the model.
\end{definition}

\begin{definition}
Given a model $\M=\lr{S,\{R_i\mid i\in\Ag\},V}$, the semantics of $\LUT$ is defined recursively as follows.
\[
\begin{array}{|lll|}
\hline
\M,s\vDash p & \iff & s\in V(p)\\
\M,s\vDash \neg \phi & \iff & \M,s\nvDash\phi\\
\M,s\vDash\phi\land\psi &\iff & \M,s\vDash\phi\text{ and }\M,s\vDash\psi\\
\M,s\vDash \K_i\phi & \iff & \text{for all }t,\text{ if }sR_it,\text{ then }\M,t\vDash\phi \\
\M,s\vDash[\psi]\phi & \iff & \text{if } \M,s\vDash\psi\text{ then }\M|_\psi,s\vDash\phi\\
\M,s\vDash U_i\phi&\iff & \M,s\vDash\phi\text{ and for all }\psi\in\EL,\M,s\vDash {[\psi]\neg\K_i}\phi\\
\hline
\end{array}
\]
\end{definition}
where $\M|_\psi$ is the model restriction of $\M$ to $\psi$-states, and the restriction of announced formula in the semantics of $U_i$ is to avoid circularity. We say that $\phi$ is {\em valid over a frame $F$}, if for all models $\M$ based on $F$ and all states $s$, we have $\M,s\vDash\phi$. We say that $\phi$ is {\em valid}, denoted $\vDash\phi$, if it is valid over all (reflexive) frames. We say that $\phi$ is {\em satisfiable}, if there is a pointed model $(\M,s)$ such that $\M,s\vDash\phi$.

Intuitively, $U_i\phi$ means that $\phi$ is true and unknowable for $i$; in more details, $\phi$ is true and after each truthfully public announcement, $\phi$ is unknown to $i$.
\weg{\[
\begin{array}{lll}
\M,s\vDash U^2_i\phi&\iff & \text{for all }\psi\in\LEA,\M,s\vDash {[\psi]\bullet_i}\phi\\
\end{array}
\]}

One may compute the semantics of $\bullet_i$ and $\lr{\psi}\phi$ as follows:
\[
\begin{array}{lll}
\M,s\vDash \bullet_i\phi & \iff & \M,s\vDash\phi\text{ and for some }t, sR_it\text{ and }\M,t\nvDash\phi\\
\M,s\vDash\lr{\psi}\phi & \iff & \M,s\vDash\psi\text{ and }\M|_\psi,s\vDash\phi.\\
\end{array}
\]




\weg{$[\psi]p\lra (\psi\to p)$

$[\psi]\neg\phi\lra (\psi\to\neg[\psi]\phi)$

$[\psi](\phi\land\chi)\lra ([\psi]\phi\land[\psi]\chi)$

$[\psi][\chi]\phi\lra [\psi\land[\psi]\chi]\phi$

${[\psi]\bullet_i}\phi\lra (\psi\to{\bullet_i[\psi]}\phi)$

${[\psi]\K_i}\phi\lra (\psi\to{\K_i[\psi]}\phi)$


$U_i\phi\to {[\psi]\neg\K_i}\phi$, where $\psi\in\EL$

$\dfrac{\eta(\phi\land{[\psi]\neg\K_i}\phi)\text{ for all }\psi\in\mathcal{L}(\bullet)}{\eta(U_i\phi)}$

$\dfrac{\eta({[\psi]\bullet_i}\phi)\text{ for all }\psi\in\LEA}{\eta(U_i\phi)}$ 
}






\weg{\begin{proposition}
For all propositional formulas $\phi$, the semantics of $ U^1_i\phi$ and $U^2_i\phi$ coincide.
\end{proposition}}

\section{Logical properties}\label{sec.logicalproperties}


In this section, we investigate logical properties of unknown truths and unknowable truths. 

\subsection{Similarities}

First, we can summarize some similarities between the logical properties of unknown truths and the logical properties of unknowable truths in the following diagram.
\[ 
\begin{array}{|c||c|c|}
  \hline
  &\text{Unknown Truths}&\text{Unknowable Truths}\\
\hline
  &\bullet_i\phi\to\phi & U_i\phi\to\phi \\
  &\bullet_i (\phi\to\psi)\to (\bullet_i\phi\to \bullet_i\psi) & U_i(\phi\to\psi)\to (U_i\phi\to U_i\psi)\\
 & \bullet_i\phi\land \bullet_i\psi\to \bullet_i(\phi\land\psi)  & U_i\phi\land U_i\psi\to U_i(\phi\land\psi)\\
 &\bullet_i\phi\to \bullet_i{\bullet_i\phi} & U_i\phi\to U_iU_i\phi\\
 \text{validities} &\bullet_i\phi\leftrightarrow \bullet_i{\bullet_i\phi} & U_i\phi\leftrightarrow U_iU_i\phi\\
 & \neg\K_i{\bullet_i\phi} & \neg\K_iU_i\phi \\
 &  \neg {\bullet_i\K_i\phi} ~(\text{transitive})&  \neg U_i\K_i\phi~ (\text{transitive})\\
 & \neg{\bullet_i\neg\K_i\phi}~(\text{Euclidean}) & \neg U_i\neg \K_i\phi ~(\text{Euclidean})\\ 
 & [\bullet_ip]\neg{\bullet_ip} & [U_ip]\neg U_ip\\
  \hline
  \hline
 \text{invalidities} & 
 \neg{\bullet_i\phi}\to \bullet_i\neg{\bullet_i\phi} & \neg U_i\phi\to U_i\neg U_i\phi\\ 
  \hline 
\end{array}
\]

In what follows, we take some of the validities and invalidities as examples, and leave others to the reader.
\weg{\begin{proposition}
$\vDash \bullet_i\phi\to\phi$.
\end{proposition}

\begin{proof}
    Straightforward by definition of $\bullet_i$.
\end{proof}

\begin{proposition}
    $\vDash U_i\phi\to\phi$.
\end{proposition}

\begin{proof}
    Straightforward by definition of $U_i$.
\end{proof}
}

\begin{proposition}
    $\vDash U_i(\phi\to\psi)\to (U_i\phi\to U_i\psi)$.
\end{proposition}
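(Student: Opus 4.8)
The plan is to unfold the semantics of $U_i$ directly and verify the two conjuncts of the conclusion separately. Fix an arbitrary model $\M$ and state $s$, and assume $\M,s\vDash U_i(\phi\to\psi)$ and $\M,s\vDash U_i\phi$; the goal is to derive $\M,s\vDash U_i\psi$. By the truth clause for $U_i$, this amounts to establishing (i) $\M,s\vDash\psi$, and (ii) $\M,s\vDash[\chi]\neg\K_i\psi$ for every $\chi\in\EL$. For conjunct (i), I would simply read off the first (truth) conjunct of each hypothesis, namely $\M,s\vDash\phi\to\psi$ and $\M,s\vDash\phi$, and conclude $\M,s\vDash\psi$ by modus ponens at the point $s$. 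This is the only place where the hypothesis $U_i\phi$ is genuinely needed.

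For conjunct (ii), I would fix an arbitrary $\chi\in\EL$ and assume $\M,s\vDash\chi$, since otherwise $[\chi]\neg\K_i\psi$ holds vacuously by the announcement clause; it then remains to show $\M|_\chi,s\vDash\neg\K_i\psi$. The second conjunct of $\M,s\vDash U_i(\phi\to\psi)$, instantiated at this $\chi$, yields $\M,s\vDash[\chi]\neg\K_i(\phi\to\psi)$, and hence $\M|_\chi,s\vDash\neg\K_i(\phi\to\psi)$. By the semantics of $\K_i$ in the restricted model, there is a state $u$ with $s\,R_i\,u$ inside $\M|_\chi$ such that $\M|_\chi,u\nvDash\phi\to\psi$, i.e.\ $\M|_\chi,u\vDash\phi$ and $\M|_\chi,u\nvDash\psi$. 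In particular $u$ is an $R_i$-successor of $s$ in $\M|_\chi$ at which $\psi$ fails, so $\M|_\chi,s\nvDash\K_i\psi$, which is precisely $\M|_\chi,s\vDash\neg\K_i\psi$.

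The engine behind (ii) is the pointwise validity $\K_i\psi\to\K_i(\phi\to\psi)$, equivalently its contrapositive $\neg\K_i(\phi\to\psi)\to\neg\K_i\psi$: any state refuting $\phi\to\psi$ automatically refutes $\psi$, so this principle holds in every model and therefore in each restriction $\M|_\chi$. I do not expect a real obstacle; the only points requiring care are the bookkeeping around the restricted models $\M|_\chi$ and the universal quantifier over announcements $\chi\in\EL$, together with the observation that $U_i\phi$ contributes only the truth conjunct $\psi$ at $s$, whereas the ``unknowability'' conjunct of the conclusion already follows from $U_i(\phi\to\psi)$ alone.
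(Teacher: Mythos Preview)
Your proof is correct and follows essentially the same route as the paper's: unpack the two hypotheses, get $\M,s\vDash\psi$ by modus ponens from the truth conjuncts, and derive $[\chi]\neg\K_i\psi$ for each $\chi\in\EL$ from $[\chi]\neg\K_i(\phi\to\psi)$ via the validity $\K_i\psi\to\K_i(\phi\to\psi)$. The paper leaves this last implication as a one-line assertion, whereas you spell out the witness argument in the restricted model; you also make explicit the nice observation that the unknowability conjunct of $U_i\phi$ is never used.
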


\begin{proof}
    Let $\M=\lr{S,\{R_i\mid i\in\Ag\},V}$ be a model and $s\in S$. Suppose that $\M,s\vDash U_i(\phi\to\psi)$ and $\M,s\vDash U_i\phi$, to show that $\M,s\vDash U_i\psi$. By supposition, we have that $\M,s\vDash \phi$ and for all $\chi\in\EL$, $\M,s\vDash[\chi]\neg \K_i\phi$, and that $\M,s\vDash \phi\to\psi$ and for all $\chi\in\EL$, $\M,s\vDash[\chi]\neg \K_i(\phi\to\psi)$. From $\M,s\vDash\phi$ and $\M,s\vDash\phi\to\psi$, it follows that $\M,s\vDash\psi$. From for all $\chi\in\EL$, $\M,s\vDash[\chi]\neg \K_i(\phi\to\psi)$, we can derive that for all $\chi\in\EL$, $\M,s\vDash [\chi]\neg \K_i\psi$. Therefore, $\M,s\vDash U_i\psi$.
\end{proof}

\weg{\begin{proposition}
    $\vDash \bullet_i (\phi\to\psi)\to (\bullet_i\phi\to \bullet_i\psi)$.
\end{proposition}}

\begin{proposition}
    $\vDash U_i\phi\land U_i\psi\to U_i(\phi\land\psi)$.
\end{proposition}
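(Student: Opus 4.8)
The plan is to unfold the semantic clause for $U_i$ and verify the two conjuncts of $U_i(\phi\land\psi)$ separately. Fix a model $\M=\lr{S,\{R_i\mid i\in\Ag\},V}$ and a state $s\in S$, and suppose $\M,s\vDash U_i\phi$ and $\M,s\vDash U_i\psi$. By the truth condition for $U_i$ this unpacks into: $\M,s\vDash\phi$, $\M,s\vDash\psi$, and for every $\chi\in\EL$ both $\M,s\vDash[\chi]\neg\K_i\phi$ and $\M,s\vDash[\chi]\neg\K_i\psi$. The first conjunct of the goal, $\M,s\vDash\phi\land\psi$, is then immediate from $\M,s\vDash\phi$ and $\M,s\vDash\psi$.

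The substance lies in the second conjunct: for every $\chi\in\EL$ I must show $\M,s\vDash[\chi]\neg\K_i(\phi\land\psi)$. Fix such a $\chi$. If $\M,s\nvDash\chi$, the announcement clause makes $[\chi]\neg\K_i(\phi\land\psi)$ hold vacuously, so assume $\M,s\vDash\chi$; then it suffices to establish $\M|_\chi,s\vDash\neg\K_i(\phi\land\psi)$. From $\M,s\vDash[\chi]\neg\K_i\phi$ together with $\M,s\vDash\chi$ I obtain $\M|_\chi,s\vDash\neg\K_i\phi$, i.e.\ there is a $\chi$-state $t$ with $sR_it$ and $\M|_\chi,t\nvDash\phi$. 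The key observation is monotonicity of ignorance: since $\M|_\chi,t\nvDash\phi$ we also have $\M|_\chi,t\nvDash\phi\land\psi$, and hence the same $t$ witnesses $\M|_\chi,s\vDash\neg\K_i(\phi\land\psi)$. This yields $\M,s\vDash[\chi]\neg\K_i(\phi\land\psi)$, and as $\chi$ was arbitrary the second conjunct follows, completing the argument.

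I do not expect a genuine obstacle here, since the argument is essentially the semantic counterpart of the valid implication $\K_i(\phi\land\psi)\to\K_i\phi$ (equivalently $\neg\K_i\phi\to\neg\K_i(\phi\land\psi)$) pushed through the public announcement. It is worth noting that this monotonicity step uses only one of the two hypotheses: $U_i\phi$ alone already forces $\neg\K_i(\phi\land\psi)$ after each announcement, so the hypothesis $U_i\psi$ is needed only to secure the truth conjunct $\psi$. The only points demanding minor care are the vacuous case $\M,s\nvDash\chi$ and the bookkeeping that the witnessing state $t$ lives in the restricted model $\M|_\chi$ rather than in $\M$; reflexivity of $R_i$ plays no essential role beyond what is already built into the semantics.
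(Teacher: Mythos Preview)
Your proof is correct and follows essentially the same route as the paper's: unpack $U_i\phi$ and $U_i\psi$ semantically, combine the truth conjuncts, and for each $\chi\in\EL$ use $\neg\K_i\phi\to\neg\K_i(\phi\land\psi)$ inside the restricted model to obtain $[\chi]\neg\K_i(\phi\land\psi)$. The paper's proof simply asserts ``it follows that'' at the key step, whereas you spell out the case split on $\M,s\vDash\chi$ and the witnessing state $t$; your additional remark that only the $U_i\phi$ hypothesis is needed for the announcement part is correct and not noted in the paper.
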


\begin{proof}
Let $\M=\lr{S,\{R_i\mid i\in\Ag\},V}$ be a model and $s\in S$. Suppose that $\M,s\vDash U_i\phi$ and $\M,s\vDash U_i\psi$, to show that $\M,s\vDash U_i(\phi\land\psi)$. By supposition, $\M,s\vDash\phi$ and for all $\chi\in\EL$, $\M,s\vDash[\chi]\neg\K_i\phi$, and $\M,s\vDash\psi$ and for all $\chi\in\EL$, $\M,s\vDash[\chi]\neg\K_i\psi$. It follows that $\M,s\vDash \phi\land\psi$ and for all $\chi\in\EL$, $\M,s\vDash[\chi]\neg\K_i(\phi\land\psi)$, and therefore $\M,s\vDash U_i(\phi\land\psi)$.
\end{proof}

\weg{\begin{proposition}
    $\vDash \bullet_i\phi\land \bullet_i\psi\to \bullet_i(\phi\land\psi)$.
\end{proposition}}

\weg{\begin{proposition}
    $\vDash U_i(\phi\land\psi)\to U_i\phi\lor U_i\psi$.
\end{proposition}

\begin{proof}
Let $\M=\lr{S,R,V}$ be a model and $s\in S$. Assume that $\M,s\vDash U_i(\phi\land\psi)$, to show that $\M,s\vDash U_i\phi\vee U_i\psi$. By assumption, $\M,s\vDash\phi\land\psi$ and for all $\chi\in\EL$, $\M,s\vDash[\chi]\neg\K_i(\phi\land\psi)$. From the former, it follows that $\M,s\vDash\phi$ and $\M,s\vDash\psi$. From the latter, it follows that for all $\chi\in\EL$, $\M,s\vDash\chi$ implies $\M|_\chi,s\vDash\neg\K_i(\phi\land\psi)$. Note that $\M|_\chi,s\vDash\neg\K_i(\phi\land\psi)$ is equivalent to either $\M|_\chi,s\vDash\neg\K_i\phi$ or $\M|_\chi,s\vDash\neg\K_i\psi$. Then either for all $\chi\in\EL$, $\M,s\vDash\chi$ implies $\M|_\chi,s\vDash\neg\K_i\phi$ or for all $\chi\in\EL$, $\M,s\vDash\chi$ implies $\M|_\chi,s\vDash\neg\K_i\psi$.
\end{proof}}

\weg{The following result is straightforward.
\begin{proposition}
    $\vDash \bullet_i(\phi\land\psi)\to \bullet_i\phi\lor \bullet_i\psi$.
\end{proposition}}

Intuitively, one cannot know the unknowable truths. This is because unknowable truths are themselves unknowable truths, as shown below. One can show that it is equivalent to the statement that it is impossible for it to be known of any particular sentence
that it is both true and unknowable, where the latter is argued by Cook~\cite{cook2006knights}.

\begin{proposition}
    $\vDash U_i\phi\to U_iU_i\phi$, and thus $\vDash U_i\phi\leftrightarrow U_iU_i\phi$.
\end{proposition}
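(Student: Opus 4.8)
The plan is to unfold the semantics of the outer $U_i$ in $U_iU_i\phi$ into its two conjuncts and discharge each separately. To establish $\M,s\vDash U_iU_i\phi$ I must show (a) $\M,s\vDash U_i\phi$, and (b) for every $\psi\in\EL$, $\M,s\vDash[\psi]\neg\K_iU_i\phi$. Clause (a) is immediate from the hypothesis $\M,s\vDash U_i\phi$, so the whole argument concentrates on (b). Fixing $\psi\in\EL$, if $\M,s\nvDash\psi$ then $[\psi]\neg\K_iU_i\phi$ holds vacuously; otherwise $s$ survives into the updated model $\M|_\psi$, and by the announcement clause it suffices to prove $\M|_\psi,s\vDash\neg\K_iU_i\phi$.

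The crux is therefore the auxiliary validity $\vDash\neg\K_iU_i\phi$ --- the claim that an unknowable truth can never be known --- which I would prove once and for all over an arbitrary reflexive model $\mathcal{N}$ and then instantiate at $\mathcal{N}=\M|_\psi$ (note that public announcement preserves reflexivity, so $\M|_\psi$ is again one of the frames over which validity is asserted). For the validity, suppose toward a contradiction that $\mathcal{N},t\vDash\K_iU_i\phi$. Reflexivity of $R_i$ gives $tR_it$, hence $\mathcal{N},t\vDash U_i\phi$; unfolding this and taking the epistemic formula $\top\in\EL$ (whose announcement leaves $\mathcal{N}$ unchanged) yields $\mathcal{N},t\vDash\neg\K_i\phi$, so there is a successor $u$ with $tR_iu$ and $\mathcal{N},u\nvDash\phi$. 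On the other hand, $\mathcal{N},t\vDash\K_iU_i\phi$ forces $\mathcal{N},u\vDash U_i\phi$, and factivity $U_i\phi\to\phi$ gives $\mathcal{N},u\vDash\phi$, contradicting $\mathcal{N},u\nvDash\phi$. Hence $\mathcal{N},t\vDash\neg\K_iU_i\phi$.

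Putting the pieces together, clause (b) follows from the validity applied in $\M|_\psi$, so $\M,s\vDash U_iU_i\phi$, establishing $\vDash U_i\phi\to U_iU_i\phi$. The converse $\vDash U_iU_i\phi\to U_i\phi$ is just the factivity schema $U_i\psi\to\psi$ instantiated at $\psi=U_i\phi$, so the biconditional $\vDash U_i\phi\leftrightarrow U_iU_i\phi$ drops out. I expect the only real obstacle to be bookkeeping with the nested dynamic modalities: one must be careful that the universal quantification over $\EL$ inside $U_i$ makes $\top$ available, that announcing $\top$ genuinely returns the same model, and above all that the validity $\neg\K_iU_i\phi$ transfers to the restricted model $\M|_\psi$ because reflexivity is inherited --- without that observation the reduction in clause (b) does not close.
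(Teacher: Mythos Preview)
Your proposal is correct and follows essentially the same route as the paper: both arguments dispose of clause~(a) by hypothesis and establish clause~(b) by deriving a contradiction from $\K_iU_i\phi$ in the updated model $\M|_\psi$, using reflexivity together with the $\top$-announcement to extract both $\K_i\phi$ and $\neg\K_i\phi$ (equivalently, a successor where $\phi$ both holds and fails). The only cosmetic difference is that you package the contradiction as the standalone validity $\vDash\neg\K_iU_i\phi$ and then invoke it in $\M|_\psi$ after noting that reflexivity survives restriction, whereas the paper inlines the same contradiction directly in $\M|_\psi$; the paper does state $\vDash\neg\K_iU_i\phi$ separately, but only \emph{after} this proposition.
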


\begin{proof}
   Let $\M=\lr{S,\{R_i\mid i\in\Ag\},V}$ be a model and $s\in S$. Suppose that $\M,s\vDash U_i\phi$, to show that $\M,s\vDash U_iU_i\phi$. By supposition, it suffices to show that for all $\psi\in \EL$, $\M,s\vDash[\psi]\neg \K_iU_i\phi$.

   If not, then for some $\psi\in\EL$ we have $\M,s\vDash\lr{\psi} \K_iU_i\phi$, thus $\M|_\psi,s\vDash \K_iU_i\phi$. From this it follows that for all $t$ such that $sR_it$, $\M|_\psi,t\vDash U_i\phi$, and then $\M|_\psi,t\vDash\phi$, and for all $\chi\in\EL$, $\M|_\psi,t\vDash[\chi]\neg \K_i\phi$, which implies that $\M|_\psi,t\vDash\neg \K_i\phi$. Since $sR_is$, we obtain $\M|_\psi,s\vDash \neg \K_i\phi$. On the other hand, since for all $t$ such that $sR_it$ we have $\M|_\psi,t\vDash\phi$, we infer that $\M|_\psi,s\vDash\K_i\phi$: a contradiction.

   Based on the above analysis, we conclude that $\M,s\vDash U_iU_i\phi$, as desired.
\end{proof}

\weg{If it is an unknown truth that $p$, it is an unknown truth that it is an unknown truth that $p$.

\begin{proposition}
    $\vDash \bullet_i\phi\to \bullet_i{\bullet_i\phi}$, and thus $\vDash \bullet_i\phi\leftrightarrow \bullet_i{\bullet_i\phi}$.
\end{proposition}}

Intuitively, one cannot know the unknowable truths, since otherwise one would know the truths that cannot be known, which is impossible. In other words, unknowable truths are necessary limits to knowledge. This follows immediately from the result below.
\begin{proposition}\label{prop.notknowunknowable}
    $\vDash \neg\K_iU_i\phi$.
\end{proposition}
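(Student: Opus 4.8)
The plan is to establish the validity directly from the semantics, arguing by contradiction at an arbitrary pointed model. So I would fix a model $\M=\lr{S,\{R_i\mid i\in\Ag\},V}$ and a state $s\in S$, and suppose toward a contradiction that $\M,s\vDash\K_iU_i\phi$. By the truth clause for $\K_i$, this supposition means $\M,t\vDash U_i\phi$ for every $t$ with $sR_it$.

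From here I would extract two incompatible facts. First, factivity of $U_i$ (the top-left entry of the similarities table, immediate from the definition of $U_i$) gives $\M,t\vDash\phi$ for each $R_i$-successor $t$ of $s$; unfolding the $\K_i$ clause once more, this is precisely $\M,s\vDash\K_i\phi$. Second, since every $R_i$ is reflexive we have $sR_is$, so $s$ itself is among those successors, whence $\M,s\vDash U_i\phi$.

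The one genuinely delicate step is to read off $\neg\K_i\phi$ from $\M,s\vDash U_i\phi$. By definition $U_i\phi$ asserts $\M,s\vDash[\psi]\neg\K_i\phi$ for \emph{every} $\psi\in\EL$, and the plan is to instantiate this universal quantifier at a validity such as $p\vee\neg p$. Announcing a formula that holds throughout the model leaves it unchanged, i.e. $\M|_{p\vee\neg p}=\M$, so $[p\vee\neg p]\neg\K_i\phi$ collapses to $\neg\K_i\phi$, yielding $\M,s\vDash\neg\K_i\phi$. This flatly contradicts the $\M,s\vDash\K_i\phi$ derived above, so the supposition fails and $\M,s\vDash\neg\K_iU_i\phi$; since $\M$ and $s$ were arbitrary, $\vDash\neg\K_iU_i\phi$. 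I expect this tautology-announcement move --- recognizing that the quantification over $\EL$-announcements in the definition of $U_i$ subsumes the vacuous announcement, so that $U_i\phi$ entails $\neg\K_i\phi$ --- to be the only point requiring care; the remainder is routine unfolding of the $\K_i$ clause together with reflexivity.
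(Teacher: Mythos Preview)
Your proof is correct and follows essentially the same route as the paper: assume $\M,s\vDash\K_iU_i\phi$, derive $\M,s\vDash\K_i\phi$ from factivity of $U_i$ at all successors, then use reflexivity to get $\M,s\vDash U_i\phi$ and instantiate the universal announcement at a tautology to obtain $\M,s\vDash\neg\K_i\phi$. The only cosmetic difference is that the paper announces $\top$ while you announce $p\vee\neg p$; both are $\EL$-validities whose announcement leaves the model unchanged.
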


\begin{proof}
Suppose, for a contradiction, that there is a model $\M=\lr{S,\{R_i\mid i\in\Ag\},V}$ and $s\in S$ such that $\M,s\vDash\K_iU_i\phi$. Then for all $t$, if $sR_it$ then $\M,t\vDash U_i\phi$, thus $\M,t\vDash\phi$ and for all $\psi\in\EL$, $\M,t\vDash[\psi]\neg\K_i\phi$. On one hand, we can obtain that $\M,s\vDash\K_i\phi$; on the other hand, by letting $t$ be $s$ and $\psi$ be $\top$, we infer that $\M,s\vDash\neg\K_i\phi$: a contradiction.
\end{proof}

\weg{Similarly, one does not know an unknown truth.
\begin{proposition}
$\vDash \neg\K_i{\bullet_i\phi}$.
\end{proposition}}

\begin{remark}
If we assume the property of transitivity, then $\neg U_i\K_i\phi$ is valid, which says that all knowledge is not an unknowable truth. The proof goes as follows.
Assume, for a contradiction, that there is a model $\M=\lr{S,\{R_i\mid i\in\Ag\},V}$ and a state $s\in S$ such that $\M,s\vDash U_i\K_i\phi$. Then $\M,s\vDash\K_i\phi$, which implies that $\M,s\vDash\K_i\K_i\phi$ due to transitivity. We have also that for all $\psi\in\EL$, $\M,s\vDash[\psi]\neg\K_i\K_i\phi$. By letting $\psi$ be $\top$, we infer that $\M,s\vDash\neg\K_i\K_i\phi$. Contradiction. Similarly, if we assume the property of Euclidicity, we can show that $\neg U_i\neg \K_i\phi$ is valid, which says that all non-knowledge is not an unknowable truth. Similar results apply to the case for the notion of unknown truths: $\neg{\bullet_i\K_i\phi}$ is valid over transitive frames, and $\neg{\bullet_i\neg\K_i\phi}$ is valid over Euclidean frames.
\end{remark}

The following states that {\em not} all non-unknowable-truths are themselves unknowable truths.
\begin{proposition}
    $\nvDash \neg U_i\phi\to U_i\neg U_i\phi$.
\end{proposition}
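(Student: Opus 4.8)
The plan is to refute validity by exhibiting a single pointed model at which the implication fails. The natural candidate is the simplest reflexive model of all: a one-point model $\M=\lr{\{s\},\{R_i\mid i\in\Ag\},V}$ with each $R_i$ the reflexive loop $sR_is$, and $V$ arbitrary. First I would record two structural facts about such a one-point reflexive model. Knowledge collapses onto truth: since the only $R_i$-successor of $s$ is $s$ itself, $\M,s\vDash\K_i\chi$ holds exactly when $\M,s\vDash\chi$, for every formula $\chi$. And announcing $\top$ leaves the model unchanged, $\M|_\top=\M$, so $[\top]\neg\K_i\chi$ is equivalent to $\neg\K_i\chi$ at $s$. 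In particular $\top\in\EL$, so $\top$ is an admissible witness in the announcement clause of the semantics of $U_i$.

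Next I would show that no formula is an unknowable truth at $s$. For any $\chi$, the definition of $U_i$ gives $U_i\chi\to[\top]\neg\K_i\chi$, hence $U_i\chi\to\neg\K_i\chi$, hence by the collapse $U_i\chi\to\neg\chi$; but $U_i\chi\to\chi$ holds by factivity. Thus $U_i\chi$ is unsatisfiable at $s$. Applying this to $\chi:=\phi$ yields $\M,s\vDash\neg U_i\phi$, so the antecedent of the implication is satisfied regardless of how $V$ evaluates $\phi$.

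Finally I would argue that $\neg U_i\phi$ itself fails to be an unknowable truth, which is exactly what breaks the implication. Since $\M,s\vDash\neg U_i\phi$, the collapse gives $\M,s\vDash\K_i\neg U_i\phi$, equivalently $\M,s\nvDash[\top]\neg\K_i\neg U_i\phi$. As $\top\in\EL$, this single instance of the universal announcement clause already fails, so $\M,s\nvDash U_i\neg U_i\phi$. Combining, $\M,s\vDash\neg U_i\phi$ while $\M,s\nvDash U_i\neg U_i\phi$, whence $\M,s\nvDash\neg U_i\phi\to U_i\neg U_i\phi$, refuting the schema (for every instance $\phi$, in fact).

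There is essentially no obstacle here; the only point requiring care is the announcement quantifier in the definition of $U_i$. The key simplification is that the witness $\psi=\top$ suffices in both roles — both to see that $U_i\phi$ fails and to see that $U_i\neg U_i\phi$ fails — because on a one-point reflexive model $\top$ induces the identity restriction and every truth is automatically known. This mirrors the failure of the analogous schema $\neg\bullet_i\phi\to\bullet_i\neg\bullet_i\phi$ for unknown truths listed among the invalidities: both break for the same reason, namely that a true-but-known proposition is neither an unknown truth nor an unknowable truth.
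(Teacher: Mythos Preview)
Your proof is correct, and it takes a different route from the paper's. The paper instantiates the schema at $\phi=\top$: it observes that $\neg U_i\top$ is valid (announcing $\top$ makes $\K_i\top$ hold), and then invokes the later Proposition~\ref{prop.validityisknowable} (``validities are knowable'') to conclude that $\neg U_i\top$, being itself a validity, cannot be an unknowable truth, so $U_i\neg U_i\top$ fails everywhere. Your argument instead fixes a one-point reflexive model and shows directly that \emph{every} $U_i\chi$ fails there, because $\K_i$ collapses to truth and the $\psi=\top$ witness already blocks the unknowability clause.

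What each buys: the paper's argument is more conceptual but relies on a forward reference to Proposition~\ref{prop.validityisknowable}; yours is self-contained and more elementary, and in fact establishes the stronger conclusion that the implication $\neg U_i\phi\to U_i\neg U_i\phi$ fails at this pointed model for \emph{every} instance $\phi$, not just $\phi=\top$. Your remark at the end, that the same one-point model simultaneously refutes the $\bullet_i$-analogue $\neg\bullet_i\phi\to\bullet_i\neg\bullet_i\phi$, is also a nice uniformity the paper does not make explicit.
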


\begin{proof}
    We show that $\nvDash \neg U_i\top \to U_i\neg U_i\top$. First, it should be clear that $\vDash \neg U_i\top$, since there is an announcement (say $\top$) such that after that the agent knows $\top$. However, $\nvDash U_i\neg U_i\top$, since validities are always knowable truths, thus not unknowable truths (we will defer the statement to Prop.~\ref{prop.validityisknowable}).
\end{proof}

\weg{Similarly, we can show that
\begin{proposition}
$\nvDash \neg{\bullet_i\phi}\to \bullet_i\neg{\bullet_i\phi}$.
\end{proposition}}

As mentioned in the introduction, Moore sentences such as $\bullet_ip$ and $U_ip$ are unsuccessful and self-refuting. It is shown in~\cite{Fan:2019} that $\bullet_ip$ is unsuccessful and self-refuting. Here we show that $U_ip$ is also unsuccessful and self-refuting.

\begin{proposition}
$\vDash[U_ip]
\neg U_ip$, and thus $\nvDash [U_ip]U_ip$.
\end{proposition}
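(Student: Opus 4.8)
The plan is to establish the self-refutation $\vDash[U_ip]\neg U_ip$ first and then read off $\nvDash[U_ip]U_ip$ from it. For the self-refutation I would argue semantically. Fix a model $\M=\lr{S,\{R_i\mid i\in\Ag\},V}$ and a state $s$; if $\M,s\nvDash U_ip$ then $[U_ip]\neg U_ip$ holds vacuously at $s$, so assume $\M,s\vDash U_ip$ and pass to the model restriction $\M|_{U_ip}$, in which $s$ survives. The key observation is that every surviving state satisfies $p$: by the validity $U_i\phi\to\phi$ (top row of the diagram) each $U_ip$-state is a $p$-state, and since $p$ is atomic its truth value is unaffected by the restriction. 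Hence every $R_i$-successor of $s$ in $\M|_{U_ip}$ is a $p$-state, so $\M|_{U_ip},s\vDash\K_ip$.

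Next I would instantiate the universal clause in the semantics of $U_i$ at the trivial announcement $\top\in\EL$. Because $(\M|_{U_ip})|_\top=\M|_{U_ip}$, the statement $\M|_{U_ip},s\vDash\lr{\top}\K_ip$ collapses to $\M|_{U_ip},s\vDash\K_ip$, which we have just shown. Therefore $\M|_{U_ip},s\nvDash[\top]\neg\K_ip$, so the defining condition of $U_ip$ fails at $s$ in $\M|_{U_ip}$, giving $\M|_{U_ip},s\vDash\neg U_ip$. As $\M$ and $s$ were arbitrary, this yields $\vDash[U_ip]\neg U_ip$.

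For the ``thus'' I would use the general principle that a self-refuting formula can be successful only if it is unsatisfiable: were both $[U_ip]U_ip$ and $[U_ip]\neg U_ip$ valid, then by distribution of announcements over conjunction $[U_ip]\bot$ would be valid, i.e. $\vDash\neg U_ip$, so $U_ip$ would be unsatisfiable. It thus suffices to exhibit one pointed model with $\M,s\vDash U_ip$; at such a point the announcement is genuine, the first part gives $\M|_{U_ip},s\vDash\neg U_ip$, and the assumed validity of $[U_ip]U_ip$ would force the contradiction $\M|_{U_ip},s\vDash U_ip$.

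The hard part will be this last ingredient, namely the satisfiability of $U_ip$. The difficulty is intrinsic: keeping $p$ unknown after every truthful announcement $\psi\in\EL$ requires that some $\neg p$-successor of $s$ survive each update, and in particular the announcement $\psi=p$ must not reveal $p$ — which is precisely what an announcement of an atom tends to do. I would therefore expect the witnessing point to be supplied by the earlier discussion establishing that unknowable truths exist, reusing that model here rather than building a fresh one, while checking that the relevant instance really does remain unknown under all $\EL$-announcements.
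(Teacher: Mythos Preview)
Your argument for $\vDash[U_ip]\neg U_ip$ is correct and essentially the paper's: both observe that every $U_ip$-state is a $p$-state, so in the restricted model $\K_ip$ holds at $s$, whence $U_ip$ fails there. Your direct computation inside $\M|_{U_ip}$ (all survivors satisfy $p$, hence $\K_ip$, hence the $\psi=\top$ clause fails) is if anything cleaner than the paper's route through the containment $\M|_{U_ip}|_{U_ip}\subseteq\M|_p|_{U_ip}$.

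For the ``thus'' clause you correctly isolate what is required, namely the satisfiability of $U_ip$, and you correctly sense the obstacle: the announcement $\psi=p$ itself. That obstacle is decisive. For any pointed model with $\M,s\vDash p$, every state of $\M|_p$ satisfies the atom $p$, so $\M|_p,s\vDash\K_ip$, hence $\M,s\nvDash[p]\neg\K_ip$, hence $\M,s\nvDash U_ip$. Thus $U_ip$ is \emph{unsatisfiable}, so $[U_ip]U_ip$ is vacuously valid and the asserted $\nvDash[U_ip]U_ip$ is false as stated. The existence results elsewhere in the paper that you hoped to borrow concern formulas such as $U_i\bullet_i\phi$ (via Prop.~\ref{prop.unknowntruthsareunknowable}), not $U_ip$ for atomic $p$; they do not furnish the witness you need. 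The paper's own proof treats only the first clause and leaves the ``thus'' unargued, so the gap you detected lies in the proposition itself rather than in your strategy.
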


\begin{proof}
    Suppose that $\M,s\vDash U_ip$, to show that $\M|_{U_ip},s\nvDash U_ip$. By supposition, $\M,s\vDash p$. Thus $\M|_{U_ip}\subseteq \M|_p$, and then $\M|_{U_ip}|_{U_ip}\subseteq \M|_p|_{U_ip}$. Since $\M|_p,s\vDash \K_ip$, we obtain that $\M|_p,s\nvDash U_ip$, that is, $s\notin \M|_p|_{U_ip}$. It then follows that $s\notin \M|_{U_ip}|_{U_ip}$, and therefore $\M|_{U_ip},s\nvDash U_ip$.
\end{proof}

One may ask if there are any differences between the logical properties of unknown truths and the logical properties of unknowable truths. The answer is positive. We have seen from~\cite{Fan:2019} that $\bullet_i(\phi\land\psi)\to \bullet_i\phi\lor \bullet_i\psi$ is valid. In contrast, $U_i(\phi\land\psi)\to U_i\phi\vee U_i\psi$ is invalid.
\begin{proposition}
    $\nvDash U_i(\phi\land\psi)\to U_i\phi\lor U_i\psi$.
\end{proposition}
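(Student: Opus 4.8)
The plan is to refute the schema by exhibiting a single pointed model together with a concrete instantiation of $\phi$ and $\psi$; since invalidity requires only one counterexample, it suffices to find a state $s$ at which $U_i(\phi\land\psi)$ holds while both $U_i\phi$ and $U_i\psi$ fail. The guiding idea is to arrange that the \emph{conjunction} is a Moore-type sentence, hence genuinely unknowable, while each \emph{conjunct} stays individually knowable. Concretely I would take $\phi:=p$ and $\psi:=\neg\K_i p$, so that $\phi\land\psi$ is exactly the unknown-truth formula $\bullet_i p$. For the model I would use the two-state reflexive model $\M=\lr{\{s,t\},\{R_i\},V}$ with $V(p)=\{s\}$ and $R_i=\{(s,s),(t,t),(s,t)\}$, evaluated at $s$. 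A first easy check is that $\M,s\vDash\bullet_i p$: here $p$ is true at $s$, and the successor $t$ falsifies $p$, so $\neg\K_i p$ holds at $s$.

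The key step is to verify $\M,s\vDash U_i(\phi\land\psi)$, i.e. $\M,s\vDash U_i\bullet_i p$. Since $\bullet_i p$ is already true at $s$, by the semantics of $U_i$ it remains to show $\M,s\vDash[\chi]\neg\K_i\bullet_i p$ for every $\chi\in\EL$. At this point I would invoke the validity $\neg\K_i\bullet_i\phi$ recorded in the similarity diagram: because it is valid over all (reflexive) frames, it continues to hold in every model restriction $\M|_\chi$; hence $\neg\K_i\bullet_i p$ is true at $s$ after any truthful announcement, and the conditional $[\chi]\neg\K_i\bullet_i p$ is moreover vacuously true whenever $\chi$ fails at $s$. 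This is really the crux of the argument, and also where I expect the only conceptual subtlety: the unknowability of the conjunction comes for free from the fact that one can never know an unknown truth, a property that must be genuinely frame-valid so as to survive model restriction.

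Finally I would refute each disjunct separately. For $\neg U_i\phi$, announcing the $\EL$-formula $p$ yields $\M|_p$, in which $s$ is the only surviving world, so $\K_i p$ holds there; hence $\neg[p]\neg\K_i p$, witnessing that $p$ is knowable. For $\neg U_i\psi=\neg U_i\neg\K_i p$, I would observe that already in $\M$ (equivalently, after announcing $\top\in\EL$) both $s$ and $t$ satisfy $\neg\K_i p$, so $\K_i\neg\K_i p$ holds at $s$; thus $\neg[\top]\neg\K_i\neg\K_i p$, witnessing that $\neg\K_i p$ is knowable. Combining the three facts gives $\M,s\vDash U_i(\phi\land\psi)\land\neg U_i\phi\land\neg U_i\psi$, so the implication fails at $s$. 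Throughout I would take care that every announcement used ($p$ and $\top$) lies in $\EL$, as the semantics of $U_i$ demands, which is the one bookkeeping point that could otherwise invalidate the construction.
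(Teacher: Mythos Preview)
Your proof is correct and takes a genuinely different route from the paper. The paper works in a three-state model with two propositional variables and chooses the symmetric instantiation $\phi=\neg\K_ip$, $\psi=\neg\K_iq$; the argument that $U_i(\neg\K_ip\land\neg\K_iq)$ holds at the designated state proceeds by an explicit case split on which states survive a given announcement, and the knowability of each conjunct is witnessed by announcing $q$ and $p$ respectively. Your construction is smaller (two states, one variable) and conceptually cleaner: by taking $\phi=p$ and $\psi=\neg\K_ip$ you make the conjunction a Moore sentence $\bullet_ip$, so its unknowability follows immediately from the frame-validity $\neg\K_i\bullet_i\phi$ (equivalently, from $\bullet_i\phi\to U_i\bullet_i\phi$ in Proposition~\ref{prop.unknowntruthsareunknowable}), with no case analysis needed. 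The trade-off is that the paper's proof is entirely self-contained at the model level, whereas yours leans on an earlier validity; on the other hand, your example shows more directly how the phenomenon is already present for the most elementary Moore sentence and ties the result to the existing machinery. Both verifications that the individual conjuncts are knowable are routine, and your care that the witnessing announcements $p$ and $\top$ lie in $\EL$ is exactly the right bookkeeping.
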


\begin{proof}
Consider the following model $\M$:
\[
\xymatrix{t: p,\neg q\ar@(ur,ul)|{i} && s:p,q\ar@(ul,ur)|{i}\ar[ll]_i\ar[rr]^i && u:\neg p,q\ar@(ul,ur)|{i}}
\]
We show that $\M,s\vDash U_i(\neg \K_ip\land \neg \K_iq)$ but $\M,s\nvDash U_i\neg \K_ip$ and $\M,s\nvDash U_i\neg\K_iq$.
\begin{itemize}
\item $\M,s\vDash U_i(\neg \K_ip\land \neg \K_iq)$. Clearly, $\M,s\vDash\neg\K_ip\land\neg\K_iq$. Suppose, for reductio, that there exists $\psi\in\EL$ such that  $\M,s\vDash\lr{\psi}\K_i(\neg\K_ip\land\neg\K_iq)$, and then $\M,s\vDash\psi$ and $\M|_\psi,s\vDash\K_i(\neg\K_ip\land\neg\K_iq)$. Because the announcement is interpreted via world-elimination, we consider the following three cases.
\begin{itemize}
    \item $t$ is retained in the updated model $\M|_\psi$. On the one hand, since $\M|_\psi,s\vDash\K_i(\neg\K_ip\land\neg\K_iq)$ and $sR_it$, we obtain $\M|_\psi,t\vDash\neg\K_ip$; on the other hand, as $t$ can only see itself and $t\vDash p$, we have $\M|_\psi,t\vDash\K_ip$. A contradiction.
    \item $u$ is retained in the updated model $\M|_\psi$. Similar to the first case, we can show that $\M|_\psi,u\vDash\K_iq$ and $\M|_\psi,u\vDash\neg\K_iq$, a contradiction.
    \item Neither $t$ nor $u$ is retained in $\M|_\psi$. Then in $\M|_\psi$, there is only a state $s$ which is reflexive. Similarly to the first case, we can also show that $\M|_\psi,s\vDash\K_ip$ and $\M|_\psi,s\vDash\neg\K_ip$, a contradiction.
\end{itemize}
\item $\M,s\nvDash U_i\neg\K_ip$. We have seen that $\M,s\vDash\neg\K_ip$. It suffices to show that for some $\psi\in\EL$ we have $\M,s\vDash\lr{\psi}\K_i\neg\K_ip$. This is indeed the case, because announcing $q$ makes $t$ be deleted, and in the remaining model, $\neg\K_ip$ is true at $s$ and $u$.
\item $\M,s\nvDash U_i\neg\K_iq$. The proof is analogous to that of $\M,s\nvDash U_i\neg\K_ip$, by announcing $p$ instead.
\end{itemize}
\end{proof}

\subsection{Interactions}

In what follows, we study the interactions between unknown truths and unknowable truths. First, every unknowable truth is an unknown truth, but not vise versa. This indicates that the notion of unknowable truths is stronger than that of unknown truths. 
\begin{proposition}\label{prop.unknowabletounknown}
$\vDash U_i\phi\to\bullet_i\phi$, and $\nvDash \bullet_i\phi\to U_i\phi$.
\end{proposition}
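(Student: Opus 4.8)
The plan is to handle the two conjuncts separately: the validity $\vDash U_i\phi\to\bullet_i\phi$ by a direct semantic argument, and the invalidity $\nvDash\bullet_i\phi\to U_i\phi$ by exhibiting a concrete pointed model.

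For the validity, I would fix a model $\M=\lr{S,\{R_i\mid i\in\Ag\},V}$ and a state $s$ with $\M,s\vDash U_i\phi$. By the semantics of $U_i$ this gives $\M,s\vDash\phi$ together with $\M,s\vDash[\psi]\neg\K_i\phi$ for every $\psi\in\EL$. The key step is to instantiate the universal quantifier over announcements with $\psi=\top$: since $\top\in\EL$, $\M,s\vDash\top$, and $\M|_\top=\M$, the clause for $[\top]\neg\K_i\phi$ collapses to $\M,s\vDash\neg\K_i\phi$. Combining $\M,s\vDash\phi$ with $\M,s\vDash\neg\K_i\phi$ yields $\M,s\vDash\bullet_i\phi$ by the computed semantics of $\bullet_i$, as required.

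For the invalidity, I would take $\phi$ to be a propositional variable $p$ and build a two-state model $\M$ with $S=\{s,t\}$, $R_i$ reflexive and $sR_it$, and $V(p)=\{s\}$. At $s$ we have $\M,s\vDash p$ and, because $t$ is $i$-accessible with $\M,t\nvDash p$, also $\M,s\vDash\neg\K_ip$; hence $\M,s\vDash\bullet_ip$. To refute $U_ip$ it suffices to produce a single $\psi\in\EL$ witnessing $\M,s\vDash\lr{\psi}\K_ip$. Announcing $p$ does the job: $\M|_p$ deletes $t$ and leaves $s$ as the only (reflexive) state, so $\M|_p,s\vDash\K_ip$, i.e. $\M,s\vDash\lr{p}\K_ip$, and therefore $\M,s\nvDash[p]\neg\K_ip$. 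Since $p\in\EL$, this falsifies the $U_i$ clause, giving $\M,s\nvDash U_ip$ while $\M,s\vDash\bullet_ip$.

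Neither direction is technically demanding; the only point requiring care is conceptual rather than computational. For the validity I must check that $\top$ is a legitimate announcement in $\EL$ and that the restriction $\M|_\top$ genuinely equals $\M$, so that the announcement modality trivializes correctly; for the invalidity the work is merely to verify the world-elimination semantics of $[p]$ on the chosen model. The contrast between the two halves is precisely the intended moral: an unknowable truth must remain unknown after \emph{every} announcement, whereas an unknown truth can be promoted to knowledge by a suitable announcement (here, announcing $\phi$ itself), which is exactly why the converse implication fails.
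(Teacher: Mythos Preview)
Your proposal is correct and follows essentially the same approach as the paper: the validity is obtained by instantiating the announcement with $\top$ to collapse $[\top]\neg\K_i\phi$ to $\neg\K_i\phi$, and the invalidity is witnessed by the two-state reflexive model with $V(p)=\{s\}$ and the announcement $p$. Your write-up is in fact slightly more explicit than the paper's (you spell out that $\top\in\EL$, that $\M|_\top=\M$, and that $p\in\EL$), but the argument is the same.
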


\begin{proof}
Let $\M=\lr{S,R,V}$ be a model and $s\in S$. Assume that $\M,s\vDash U_i\phi$, then $\M,s\vDash\phi$ and for all $\psi\in\EL$, $\M,s\vDash[\psi]\neg\K_i\phi$. Letting $\psi=\top$, we obtain that $\M,s\vDash\neg\K_i\phi$, and therefore $\M,s\vDash\bullet_i\phi$.

For the invalidity, consider an instance where $\phi=p$. It is possible that $p$ is true but possibly false, but $p$ must be knowable. For the strict proof, 
consider the following model:
\[
\xymatrix{\M& s:p\ar@(ul,ur)|i\ar[rr]_i && t:\neg p\ar@(ur,ul)|i}
\]
We will show that $\M,s\nvDash {\bullet_ip}\to U_ip$.

First, since $\M,s\vDash p$ and $sR_it$ and $\M,t\nvDash p$, we have $\M,s\vDash \bullet_ip$. Moreover, because $\M,s\vDash \lr{p}\K_ip$ we have that $\M,s\nvDash U_ip$.
\end{proof}

\weg{\begin{proposition}
$\bullet_i\phi\to {U_i\bullet_i}\phi$ is valid. (Wrong!)
\end{proposition}}

Also, intuitively, one cannot know the unknown truths. This is because unknown truths are themselves unknowable truths. In other words, if it is an unknown truth that $p$, it is an unknowable truth that it is an unknown truth that $p$, which 
is argued in~\cite[Thm.~2]{fitch:1963} (see also~\cite[p.~154]{Williamson:1987}).
\begin{proposition}\label{prop.unknowntruthsareunknowable}
     $\vDash \bullet_i\phi\to {U_i\bullet_i}\phi$, thus $\vDash \bullet_i\phi\leftrightarrow {U_i\bullet_i}\phi$.
\end{proposition}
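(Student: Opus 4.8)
The plan is to reduce everything to the validity of $\neg\K_i\bullet_i\phi$ over reflexive frames, which already appears in the table of validities above. By the semantics of $U_i$, proving $\M,s\vDash U_i\bullet_i\phi$ amounts to establishing two things: that $\M,s\vDash\bullet_i\phi$, and that $\M,s\vDash[\psi]\neg\K_i\bullet_i\phi$ for every $\psi\in\EL$. The first is exactly the antecedent of the implication, so I would simply assume $\M,s\vDash\bullet_i\phi$ and discharge the first conjunct immediately.

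For the second conjunct, the key observation is that $\neg\K_i\bullet_i\phi$ is not merely true at the given $(\M,s)$ but \emph{valid} over the class of all reflexive frames. I would first record this small fact: if $\M',w\vDash\K_i\bullet_i\phi$ for some reflexive model $\M'$ and world $w$, then on one hand every $R_i$-successor of $w$ satisfies $\bullet_i\phi$ and hence $\phi$, so $\M',w\vDash\K_i\phi$; on the other hand, reflexivity gives $\M',w\vDash\bullet_i\phi$, i.e. $\M',w\vDash\neg\K_i\phi$, a contradiction. Hence $\vDash\neg\K_i\bullet_i\phi$.

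Now the crucial point is that model restriction preserves reflexivity: for any $\psi$, the submodel $\M|_\psi$ is again reflexive. Therefore $\neg\K_i\bullet_i\phi$, being valid over all reflexive frames, holds at every state of $\M|_\psi$ as well; in particular, whenever $\M,s\vDash\psi$ we obtain $\M|_\psi,s\vDash\neg\K_i\bullet_i\phi$. By the clause for $[\psi]$, this is precisely $\M,s\vDash[\psi]\neg\K_i\bullet_i\phi$, and it holds for every $\psi\in\EL$. Combining the two conjuncts yields $\M,s\vDash U_i\bullet_i\phi$, establishing $\vDash\bullet_i\phi\to U_i\bullet_i\phi$. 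For the biconditional I would invoke the converse direction, which is simply the instance $\vDash U_i\bullet_i\phi\to\bullet_i\phi$ of the factivity of $U_i$ (the top row of the table).

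I do not expect a genuine obstacle here. The only step requiring care is noticing that the unknowability-after-announcement condition collapses to an outright validity, namely $\neg\K_i\bullet_i\phi$, and that this validity transfers to every restricted model $\M|_\psi$ precisely because restriction keeps the frame reflexive. Everything else is a direct unwinding of the semantics of $U_i$, $\bullet_i$, and $[\psi]$.
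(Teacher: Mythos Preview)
Your proposal is correct and rests on the same idea as the paper's proof: the impossibility of $\K_i\bullet_i\phi$ in any reflexive model, applied inside the restricted model $\M|_\psi$. The paper carries out the contradiction argument directly in $\M|_\psi$ (from $\M|_\psi,s\vDash\K_i\bullet_i\phi$ one gets both $\M|_\psi,s\vDash\K_i\phi$ and, via reflexivity, $\M|_\psi,s\nvDash\K_i\phi$), whereas you factor this through the already-listed validity $\vDash\neg\K_i\bullet_i\phi$ together with the observation that restriction preserves reflexivity; this is a slightly more modular packaging of the same argument, not a different route.
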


\begin{proof}
Let $\M=\lr{S,\{R_i\mid i\in\Ag\},V}$ be a model and $s\in S$. Suppose that $\M,s\vDash\bullet_i\phi$, to show that $\M,s\vDash U_i{\bullet_i\phi}$. For this, it suffices to prove that for all $\psi\in\EL$, $\M,s\vDash[\psi]\neg\K_i{\bullet_i\phi}$.

If not, then for some $\psi\in \EL$, we have $\M,s\vDash\lr{\psi}\K_i{\bullet_i\phi}$, then $\M|_\psi,s\vDash\K_i{\bullet_i\phi}$. Then for all $t$ such that $sR_it$, we can obtain that $\M|_\psi,t\vDash\phi$ and $\M|_\psi,t\nvDash\K_i\phi$. Then $\M|_\psi,s\vDash\K_i\phi$ and $\M|_\psi,s\nvDash\K_i\phi$ (since $sR_is$),  a contradiction.
\end{proof}


From Prop.~\ref{prop.unknowabletounknown} and Prop.~\ref{prop.unknowntruthsareunknowable}, we can see that $\vDash U_i\phi\to U_i{\bullet_i\phi}$. Despite this, the converse formula is not valid. This follows directly from Prop.~\ref{prop.unknowabletounknown} and Prop.~\ref{prop.unknowntruthsareunknowable}.
\begin{proposition}\label{prop.notunknowablebullettounknowable}
$\nvDash U_i{\bullet_i\phi}\to U_i\phi$.
\end{proposition}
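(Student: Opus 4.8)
The plan is to reduce the claim to the invalidity already recorded in Prop.~\ref{prop.unknowabletounknown}, using the equivalence of Prop.~\ref{prop.unknowntruthsareunknowable}. Recall that Prop.~\ref{prop.unknowntruthsareunknowable} gives $\vDash \bullet_i\phi\leftrightarrow U_i\bullet_i\phi$, so at every pointed model the antecedent $U_i\bullet_i\phi$ holds exactly when $\bullet_i\phi$ holds. Consequently, $U_i\bullet_i\phi\to U_i\phi$ is valid if and only if $\bullet_i\phi\to U_i\phi$ is valid: for any state $s$ of any model $\M$ we have $\M,s\vDash U_i\bullet_i\phi$ iff $\M,s\vDash\bullet_i\phi$, while the consequent $U_i\phi$ is left untouched, so the two implications have the same truth value at every pointed model.

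Since Prop.~\ref{prop.unknowabletounknown} already exhibits $\nvDash \bullet_i\phi\to U_i\phi$, the same counterexample transfers directly. Concretely, I would reuse the two-state reflexive model there: a state $s$ satisfying $p$ with an $i$-edge to a state $t$ satisfying $\neg p$, and $t$ reflexive. One checks $\M,s\vDash\bullet_i p$, hence $\M,s\vDash U_i\bullet_i p$ by Prop.~\ref{prop.unknowntruthsareunknowable}, whereas announcing $p$ gives $\M,s\vDash\lr{p}\K_i p$ and therefore $\M,s\nvDash U_i p$. This single pointed model witnesses $\M,s\vDash U_i\bullet_i p$ together with $\M,s\nvDash U_i p$, refuting the implication.

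There is essentially no obstacle here; the whole content is the observation that the biconditional of Prop.~\ref{prop.unknowntruthsareunknowable} lifts the invalidity of $\bullet_i\phi\to U_i\phi$ to that of $U_i\bullet_i\phi\to U_i\phi$. The one point worth stating carefully is that no congruence or replacement-of-equivalents lemma for $U_i$ (which has not been established in general) is needed: the rewriting is applied only to the outermost antecedent $U_i\bullet_i\phi$ at a fixed pointed model, not inside the scope of any operator, and this step is immediate from the truth definition.
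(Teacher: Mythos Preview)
Your proposal is correct and matches the paper's own argument exactly: the paper simply states that the result follows directly from Prop.~\ref{prop.unknowabletounknown} and Prop.~\ref{prop.unknowntruthsareunknowable}, and you have spelled out precisely this reduction (replace the antecedent $U_i\bullet_i\phi$ by $\bullet_i\phi$ via the validated biconditional, then invoke the already-established invalidity of $\bullet_i\phi\to U_i\phi$ with the same two-state countermodel).
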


\weg{\begin{proof}
By Prop.~\ref{prop.unknowntruthsareunknowable}, it suffices to show that $\nvDash \bullet_i\phi\to U_i\phi$. Consider the following model:
\[
\xymatrix{\M& s:p\ar@(ul,ur)|i\ar[rr]_i && t:\neg p\ar@(ur,ul)|i}
\]
We will show that $\M,s\nvDash U_i{\bullet_ip}\to U_ip$.

First, since $\M,s\vDash p$ and $sR_it$ and $\M,t\nvDash p$, we have $\M,s\vDash \bullet_ip$. Moreover, because $\M,s\vDash \lr{p}\K_ip$ we have that $\M,s\nvDash U_ip$.
\end{proof}}

The following states that unknowable truths are themselves unknown truths. In other words, if it is an unknowable that $\phi$, then it is an unknown truth that it is an unknowable truth that $\phi$.
\begin{proposition}
$\vDash U_i\phi\to {\bullet_iU_i}\phi$, thus $\vDash U_i\phi\leftrightarrow {\bullet_iU_i}\phi$.    
\end{proposition}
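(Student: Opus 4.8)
The plan is to expose the abbreviation hidden in $\bullet_i$ and then read off the result from the already-proved validity in Prop.~\ref{prop.notknowunknowable}. Recall that $\bullet_i\psi$ abbreviates $\psi\land\neg\K_i\psi$; instantiating $\psi$ with $U_i\phi$, the formula $\bullet_iU_i\phi$ is by definition nothing but $U_i\phi\land\neg\K_iU_i\phi$. So the biconditional we must establish is really
\[
U_i\phi\;\lra\;(U_i\phi\land\neg\K_iU_i\phi),
\]
and the whole statement collapses once we observe that the second conjunct on the right is a logical truth.

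For the left-to-right direction I would fix an arbitrary pointed model $(\M,s)$ with $\M,s\vDash U_i\phi$ and check the two conjuncts of $\bullet_iU_i\phi$ separately. The first conjunct, $\M,s\vDash U_i\phi$, holds by the very assumption. For the second conjunct I would invoke Prop.~\ref{prop.notknowunknowable}, which gives $\vDash\neg\K_iU_i\phi$; hence $\M,s\vDash\neg\K_iU_i\phi$ automatically, with no further work. Combining the two yields $\M,s\vDash U_i\phi\land\neg\K_iU_i\phi$, i.e. $\M,s\vDash\bullet_iU_i\phi$, as required.

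The converse direction is immediate: if $\M,s\vDash\bullet_iU_i\phi$, then since $\bullet_iU_i\phi$ unfolds to $U_i\phi\land\neg\K_iU_i\phi$, its first conjunct already delivers $\M,s\vDash U_i\phi$. Putting both directions together establishes the biconditional $\vDash U_i\phi\lra{\bullet_iU_i}\phi$. There is essentially no hard part here: the only nontrivial ingredient is that knowing an unknowable truth is impossible, and that is exactly the content of Prop.~\ref{prop.notknowunknowable}. The proof therefore reduces to spelling out the definition of $\bullet_i$ and citing that proposition, which is why this result is flagged as a direct analogue of the unknown-truths validity $\bullet_i\phi\lra{U_i\bullet_i}\phi$ proved just above.
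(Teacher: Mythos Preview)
Your proof is correct. Unfolding $\bullet_iU_i\phi$ as $U_i\phi\land\neg\K_iU_i\phi$ and invoking Prop.~\ref{prop.notknowunknowable} for the second conjunct is a perfectly valid route, and the converse direction is indeed trivial.

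The paper takes a slightly different path: rather than citing Prop.~\ref{prop.notknowunknowable}, it works directly from the semantics of $\bullet_i$, arguing by contradiction that there must exist an $R_i$-successor $t$ of $s$ with $\M,t\nvDash U_i\phi$. If every successor satisfied $U_i\phi$, then every successor would satisfy $\phi$ (giving $\M,s\vDash\K_i\phi$) while $s$ itself, being a successor by reflexivity, would satisfy $\neg\K_i\phi$ (from $U_i\phi$ with $\psi=\top$), a contradiction. This is essentially an inline reproof of Prop.~\ref{prop.notknowunknowable}, so the two arguments have the same mathematical content. Your approach is more economical because it reuses an already-established result; the paper's is self-contained and does not depend on the order in which the propositions appear.
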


\begin{proof}
Let $\M=\lr{S,\{R_i\mid i\in\Ag\},V}$ be a model and $s\in S$. Suppose that $\M,s\vDash U_i\phi$, to demonstrate that $\M,s\vDash \bullet_i{U_i\phi}$. For this, it suffices to show that for some $t$, $sR_it$ and $\M,t\nvDash U_i\phi$.

If not, then for all $t$ such that $sR_it$, we have $\M,t\vDash U_i\phi$, thus $\M,t\vDash\phi$ and for all $\psi\in\EL$, $\M,t\vDash[\psi]\neg\K_i\phi$, which gives us $\M,t\vDash\neg\K_i\phi$. Since $sR_is$, we derive that $\M,s\vDash\neg\K_i\phi$; however, from $\M,t\vDash\phi$ for all $t$ with $sR_it$, it follows that $\M,s\vDash\K_i\phi$: a contradiction.
\end{proof}

Fitch's paradox of knowability states that if all truths are knowable, then all truths are actually known. This can be shown as follows.
\begin{proposition}
    $\vDash \neg U_i\phi$ for all $\phi$, then $\vDash\neg {\bullet_i\phi}$ for all $\phi$.
\end{proposition}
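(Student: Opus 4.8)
The plan is to obtain this as an immediate consequence of Proposition~\ref{prop.unknowntruthsareunknowable}, which already establishes the crucial half-step of Fitch's argument: that any unknown truth is itself an unknowable truth, i.e. $\vDash\bullet_i\phi\to U_i\bullet_i\phi$. Reading the contrapositive of the target statement, what we are really proving is that if there were an unknown truth then there would be an unknowable truth --- the familiar Fitch implication --- and Proposition~\ref{prop.unknowntruthsareunknowable} pinpoints exactly which unknowable truth to exhibit, namely $\bullet_i\phi$ itself. So the strategy is not to reason afresh about announcements, but to plug the earlier proposition into the universally quantified hypothesis.

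Concretely, I would argue as follows. Fix an arbitrary formula $\phi$ and assume the hypothesis that $\vDash\neg U_i\psi$ for every formula $\psi$. Instantiating this universal hypothesis at the particular formula $\psi:=\bullet_i\phi$ gives $\vDash\neg U_i\bullet_i\phi$. On the other hand, Proposition~\ref{prop.unknowntruthsareunknowable} supplies $\vDash\bullet_i\phi\to U_i\bullet_i\phi$. Combining these two validities by propositional modus tollens --- at the semantic level, in any pointed model where $\bullet_i\phi$ held, $U_i\bullet_i\phi$ would hold by the first validity, contradicting the second --- we conclude $\vDash\neg\bullet_i\phi$. Since $\phi$ was arbitrary, $\vDash\neg\bullet_i\phi$ for all $\phi$, as required.

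There is no genuine obstacle here; the entire content has been front-loaded into Proposition~\ref{prop.unknowntruthsareunknowable}. The only point demanding a little care is the handling of the two nested universal quantifiers: the hypothesis quantifies over all $\psi$, and one must remember that it may be instantiated at the compound formula $\bullet_i\phi$ (which unfolds to $\phi\land\neg\K_i\phi$), rather than only at propositional or ``atomic'' instances. Once that instantiation is licensed, the remaining reasoning is purely propositional and requires no further appeal to the semantics of announcements or of $U_i$.
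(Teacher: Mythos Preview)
Your proof is correct and, like the paper, rests entirely on Proposition~\ref{prop.unknowntruthsareunknowable}. The logical packaging differs, though: you argue directly, assuming the universal hypothesis, instantiating it at $\psi=\bullet_i\phi$, and applying modus tollens against $\vDash\bullet_i\phi\to U_i\bullet_i\phi$. The paper instead observes that the implication holds vacuously, since the antecedent is false: because $\bullet_i\phi$ is satisfiable, Proposition~\ref{prop.unknowntruthsareunknowable} yields that $U_i\bullet_i\phi$ is satisfiable, so $\nvDash\neg U_i\psi$ for some $\psi$. Your route is the cleaner conditional argument and is closer in spirit to the classical presentation of Fitch's paradox; the paper's route buys the extra observation that the verificationist hypothesis is in fact refuted (there really are unknowable truths), not merely that it would entail omniscience.
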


\begin{proof}
    It suffices to show that $\nvDash \neg U_i\phi$ for some $\phi$, which intuitively says that there are unknowable truths. This follows from Prop.~\ref{prop.unknowntruthsareunknowable}, since $\bullet_i\phi$ is satisfiable. 
\end{proof}

Although not every truth (e.g. unknown truths) is knowable, every {\em logical truth} (namely, validity) is knowable.

\begin{proposition}\label{prop.validityisknowable}
    If $\vDash \phi$, then $\vDash\neg U_i\phi$.
\end{proposition}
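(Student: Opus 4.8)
The plan is to unfold the semantics of $U_i$ and then falsify it by exhibiting a single convenient announcement. So assume $\vDash\phi$, fix an arbitrary pointed model $(\M,s)$ with $\M=\lr{S,\{R_i\mid i\in\Ag\},V}$, and aim to show $\M,s\nvDash U_i\phi$. Recall that $\M,s\vDash U_i\phi$ requires two things: $\M,s\vDash\phi$, and for \emph{all} $\psi\in\EL$ we have $\M,s\vDash[\psi]\neg\K_i\phi$. Since $\phi$ is valid, the first conjunct holds automatically at every pointed model, so it is useless to attack; the entire task reduces to refuting the universal second conjunct.

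To refute it, I would produce one witness $\psi\in\EL$ with $\M,s\nvDash[\psi]\neg\K_i\phi$, equivalently (by the computed semantics of the diamond) with $\M,s\vDash\lr{\psi}\K_i\phi$. The natural choice is the trivial announcement $\psi=\top$, which is legitimate because $\top\in\EL$ (this respects the syntactic restriction on the announced formula built into the semantics of $U_i$). Announcing $\top$ leaves the model unchanged, $\M|_\top=\M$, and $\M,s\vDash\top$ holds trivially, so it remains only to check $\M,s\vDash\K_i\phi$. For this, consider any $t$ with $sR_it$: since $\phi$ is valid, $\M,t\vDash\phi$; as $t$ was arbitrary, $\M,s\vDash\K_i\phi$. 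Hence $\M,s\vDash\lr{\top}\K_i\phi$, which is exactly $\M,s\nvDash[\top]\neg\K_i\phi$.

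Putting these together, the universally quantified second clause in the truth condition for $U_i\phi$ already fails at $\top$, so $\M,s\nvDash U_i\phi$; since $(\M,s)$ was arbitrary, $\vDash\neg U_i\phi$. I do not anticipate a genuine obstacle here: the argument is a one-line application of the fact that a validity is known after the vacuous announcement. The only point requiring minor care is the syntactic side condition, namely confirming that the witness $\top$ indeed lies in $\EL$ so that it is an admissible announcement in the definition of $U_i$; no appeal to reflexivity or any other frame property is needed.
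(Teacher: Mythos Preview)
Your argument is correct and is exactly the natural one: instantiate the universal quantifier over announcements with $\top\in\EL$ and observe that $\K_i\phi$ holds everywhere whenever $\phi$ is valid. The paper in fact omits a proof of this proposition entirely, but the same device (taking $\psi=\top$) is what the paper uses in the proof of Proposition~\ref{prop.notknowunknowable}, so your approach matches the paper's implicit reasoning.
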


\weg{\begin{proposition}
$\vDash[\neg{\bullet_ip}]\neg{\bullet_ip}$.
\end{proposition}

\begin{proposition}
    $\vDash[\neg U_ip]\neg U_ip$?
\end{proposition}

\begin{proof}
Suppose that $\M,s\vDash \neg U_ip$. Then either $\M,s\vDash\neg p$ or for some $\psi\in\EL$, $\M,s\vDash\lr{\psi}\K_i p$. If the first case holds, notice that $\M|_{\neg U_i p},s\nvDash p$, thus $\M|_{\neg U_ip},s\nvDash  U_ip$, and hence $\M|_{\neg U_ip},s\vDash \neg U_ip$. Otherwise, $\M,s\vDash p$ and the second case holds. We can show that $\M|_{\lr{\psi}\K_ip},s\vDash \lr{\psi}\K_ip$ (Wrong! take $\psi=p\land\neg\K_ip$). Thus $\M|_{\lr{\psi}\K_ip},s\vDash \neg U_ip$, that is, $s\in \M|_{\lr{\psi}\K_ip}|_{\neg U_ip}$. Also, $\M|_{\lr{\psi}\K_ip}\subseteq \M|_{\neg U_ip}$, thus $\M|_{\lr{\psi}\K_ip}|_{\neg U_ip}\subseteq \M|_{\neg U_ip}|_{\neg U_ip}$, and hence $s\in \M|_{\neg U_ip}|_{\neg U_ip}$, viz. $\M|_{\neg U_ip},s\vDash\neg U_ip$, as desired.
\end{proof}}

\weg{\begin{proposition}
$\nvDash\bullet_i\hat{\bullet}_i\phi\to \hat{\bullet}_i{\bullet_i\phi}$, $\nvDash \hat{\bullet}_i{\bullet_i\phi}\to \bullet_i\hat{\bullet}_i\phi$, where $\hat{\bullet}_i\phi=_{df}\neg{\bullet_i\neg}\phi$.
\end{proposition}

\begin{proposition}
    $\vDash U_i\hat{U}_i\phi\to \hat{U}_iU_i\phi$? $\vDash \hat{U}_iU_i\phi\to U_i\hat{U}_i\phi$? where $\hat{U}_i\phi=_{df}\neg U_i\neg \phi$.
\end{proposition}
}

\weg{\section{Expressivity}

In this section, we investigate the expressive power of $\LUT$. It turns out that in the single-agent case, $\LUT$ is equally expressive as $\APAL$ and $\PAL$ and $\EL$, whereas in the multi-agent cases, $\LUT$ is more expressive than $\PAL$ and $\EL$.

\begin{proposition}
In the single-agent case, $\LUT$ is equally expressive as $\APAL$. As a consequence, $\LUT$ and $\PAL$ are equally expressive in the single-agent case.
\end{proposition}

\begin{proof}
Recall that in the single-agent case, $\APAL$ is equally expressive as $\EL$ (thus $\PAL$)~\cite[Prop.~3.12]{balbianietal:2008}. Also, $\LUT$ is an extension of $\PAL$, thus $\LUT$ is at least as expressive as $\APAL$ in the single-agent case. Moreover, due to the definability of $U_i$ in terms of $\Box$ and $\K_i$, we have that $\LUT$ is a fragment of $\APAL$, and thus $\APAL$ is at least as expressive as $\LUT$. Therefore, $\LUT$ is equally expressive as $\APAL$ in the single-agent case.
\end{proof}

\begin{proposition}
In the multi-agent case, $\LUT$ is more expressive than $\PAL$.
\end{proposition}

\begin{proof}
    Since $\LUT$ is an extension of $\PAL$ with the unknowability operator, $\PAL\preceq \LUT$. It suffices to show that $\LUT\not\preceq \PAL$. We show that $U_a(p\land\neg\K_b\K_a p)$ is {\em not} equivalent to any $\PAL$-formula.

    Suppose not, then since $\PAL$ is equally expressive as $\EL$, the formula in $\LUT$ in question is equivalent to an $\EL$-formula, say $\psi$. As $\psi$ is finite, it contains only finitely many propositional variables. Let $q$ be a propositional variable not occurring in $\psi$. Now consider the following models, where $\M$ is the left-hand side and $\M'$ is the right-hand side.
    \begin{center}
\begin{tikzpicture}[z=0.35cm]
\node (010) at (3,3,0) {${0}:\neg p$};
\node (110) at (0,3,0) {$\underline{1}:p$};
\draw (010) -- node[fill=white,inner sep=1pt] {$a$} (110);
\end{tikzpicture}
\hspace{.2cm}
\begin{tikzpicture}[z=0.35cm]
\node (010) at (3,3,0) {${00}:\neg p,\neg q$};
\node (110) at (0,3,0) {$\underline{10}:p,\neg q$};
\node (100) at (3,0,0) {${01}:\neg p,q$};
\node (000) at (0,0,0) {${11}:p,q$};
\draw (100) -- node[fill=white,inner sep=1pt] {$b$} (010);
\draw (000) -- node[fill=white,inner sep=1pt] {$a$} (100);
\draw (010) -- node[fill=white,inner sep=1pt] {$a$} (110);
\draw (000) -- node[fill=white,inner sep=1pt] {$b$} (110);
\end{tikzpicture}
\end{center}

It is not hard to check that $(\M,1)$ and $(\M',10)$ are bisimilar for atoms other than $q$, thus $\M,1\vDash\psi$ iff $\M',10\vDash \psi$. However, $\M,1\vDash U_a(p\land\neg\K_b\K_ap)$ and $\M',10\nvDash U_a(p\land\neg\K_b\K_ap)$. The argument is as follows: 
\begin{itemize}
    \item $\M,1\vDash U_a(p\land\neg\K_b\K_ap)$: firstly, $\M,1\vDash p\land\neg\K_b\K_ap$; secondly, every announcement that makes $a$ knows that $p$ at $1$ (that is, $\M,1\vDash \K_ap$) must delete the state $0$, then $\K_a\neg\K_b{\K_ap}$ is false at $1$, thus $\neg\K_a(p\land \neg\K_b{\K_ap})$ is true at $1$, and therefore $\M,1\vDash U_a(p\land\neg\K_b\K_ap)$.
    \item $\M',10\nvDash U_a(p\land\neg\K_b\K_ap)$. This is because there is a $\psi=p\vee q\in\EL$ after which $\K_a(p\land\neg\K_b\K_ap)$ is true at $10$, that is, $\M',10\vDash\lr{p\vee q}\K_a(p\land\neg\K_b\K_ap)$, and therefore $\M',10\nvDash U_a(p\land\neg\K_b\K_ap)$.
\end{itemize}
\end{proof}

\begin{proposition}
    In the multi-agent case, $\LUT$ is less expressive than $\APAL$.
\end{proposition}

\begin{proof}
    Note that $\vDash U_i\phi\lra \phi\land\neg \knowable_i\phi$, thus $\LUT\preceq \APAL$. 
\end{proof}
}


\section{Axiomatization and Soundness}\label{sec.soundness}

In this section, we will give a proof system for $\LUT$ and show its soundness. For this, we need a notion of {\em admissible form}. This notion is originated from~\cite[pp.~55–56]{goldblatt:1982}, which is called ‘necessity forms’ in~\cite{balbianietal:2007,balbianietal:2008,philippe.corrected:2015,balbianietal:2015}.

\begin{definition}[Admissible forms] Let $\phi\in\textbf{LUT}$ and $i\in\Ag$. Admissible forms $\eta(\sharp)$ are defined recursively in the following.
\[
\begin{array}{ll}
    \eta(\sharp)::= & \sharp\mid \phi\to\eta(\sharp)\mid {\K_i}\eta(\sharp)\mid [\phi]\eta(\sharp) \\
\end{array}
\]
\end{definition}

Note that $\sharp$ is a placeholder, not a formula. By replacing $\sharp$ in an admissible form $\eta(\sharp)$ with a formula $\chi$, we obtain a formula $\eta(\psi)$, defined as follows.
\[
\begin{array}{lll}
    \sharp(\chi) & = & \chi \\
    (\phi\to \eta(\sharp))(\chi) & = & \phi\to\eta(\chi)\\    {\K_i}\eta(\sharp)(\chi) & = & \K_i\eta(\chi)\\
    ([\phi]\eta(\sharp))(\chi) & = & [\phi]\eta(\chi)\\
\end{array}
\]

With the previous preparation, we now axiomatize the logic of unknowable truths $\LUT$.

\begin{definition}
The proof system $\mathbb{LUT}$ consists of the following axioms and inference rules.
\[
\begin{array}{ll}
   \text{PL} & \text{All instances of tautologies}\\
   \text{K} & \K_i(\phi\to\psi)\to (\K_i\phi\to\K_i\psi)\\
   \text{KA} & [\chi](\phi\to\psi)\to ([\chi]\phi\to [\chi]\psi)\\
   \text{T} & \K_i\phi\to \phi\\
   \text{AP}  & [\psi]p\lra (\psi\to p) \\
    \text{AN} & [\psi]\neg\phi\lra (\psi\to\neg[\psi]\phi)\\
    \text{AC} & [\psi](\phi\land\chi)\lra ([\psi]\phi\land[\psi]\chi)\\
    \text{AK} & [\psi]\K_i\phi\lra (\psi\to\K_i[\psi]\phi)\\
    \text{AA} & [\psi][\chi]\phi\lra [\psi\land[\psi]\chi]\phi\\
   \text{AU} & U_i\phi\to \phi\land [\psi]\neg\K_i\phi, \text{ where } \psi\in\EL\\
   \text{MP} & \dfrac{\phi~~~\phi\to\psi}{\psi}\\
   \text{GEN} & \dfrac{\phi}{\K_i\phi}\\
   \text{GENA} & \dfrac{\phi}{[\chi]\phi}\\
   \text{RU} & \dfrac{\eta(\phi\land[\psi]\neg \K_i\phi)\text{ for all }\psi\in\EL}{\eta(U_i\phi)}\\
\end{array}
\]
\end{definition}

The novel thing is the axiom AU and inference rule RU, which together characterize the semantics of $U_i$ in a certain way: AU says that if $\phi$ is an unknowable truth for $i$, then $\phi$ is true and after every truthfully public announcement, $\phi$ is unknown to $i$; RU roughly says that if $\phi$ is true but unknown to $i$ after every truthfully public announcement, then $\phi$ is an unknowable truth for $i$.
Other axioms and inference rules are familiar from public announcement logic, cf.~e.g.~\cite{hvdetal.del:2007}. 

We say a formula $\phi$ is a {\em theorem} of $\mathbb{LUT}$, or $\phi$ is {\em provable} in $\mathbb{LUT}$, notation: $\vdash\phi$, if $\phi$ is either an instantiation of some axiom, or obtained from axioms with an application of some inference rule. Let $\textbf{Thm}$ be the set of all theorems of $\mathbb{LUT}$.

\begin{proposition}
$\mathbb{LUT}$ is sound with respect to the class of all frames.
\end{proposition}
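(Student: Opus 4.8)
The plan is to establish soundness in the standard way: show that every axiom is valid over the class of all (reflexive) frames, and that every inference rule preserves validity. Since the paper notes that PL, K, KA, T, AP, AN, AC, AK, AA, MP, GEN, and GENA are the familiar axioms and rules of public announcement logic, I would cite the known soundness of PAL (e.g.\ \cite{hvdetal.del:2007}) for those, and concentrate the argument on the two novel components governing $U_i$, namely the axiom AU and the inference rule RU.

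First I would verify the axiom $\text{AU}$: $U_i\phi\to\phi\land[\psi]\neg\K_i\phi$ for each $\psi\in\EL$. This is immediate from the semantic clause for $U_i$, which states that $\M,s\vDash U_i\phi$ iff $\M,s\vDash\phi$ and for all $\psi\in\EL$, $\M,s\vDash[\psi]\neg\K_i\phi$. So the left conjunct $\phi$ and each instance $[\psi]\neg\K_i\phi$ follow directly; there is essentially nothing to compute here.

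Next I would check that $\text{RU}$ preserves validity. Because RU acts inside an admissible form $\eta(\sharp)$, the cleanest route is to prove an auxiliary monotonicity/substitution property of admissible forms: by induction on the structure of $\eta(\sharp)$ — covering the four cases $\sharp$, $\phi\to\eta(\sharp)$, $\K_i\eta(\sharp)$, and $[\phi]\eta(\sharp)$ — I would show that the map $\chi\mapsto\eta(\chi)$ is monotone with respect to validity in the sense that if $\vDash\chi\to\chi'$ then $\vDash\eta(\chi)\to\eta(\chi')$, and more to the point, that $\vDash\eta(\chi)$ whenever $\chi$ is ``semantically implied'' by the premise family in the relevant pointed sense. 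The key semantic fact making RU sound is that the premise, namely $\eta(\phi\land[\psi]\neg\K_i\phi)$ holding for \emph{all} $\psi\in\EL$, says exactly that at every point where the $\sharp$-slot is evaluated, $\phi$ holds together with all announcement instances $[\psi]\neg\K_i\phi$ — and by the semantic clause for $U_i$ this is precisely the condition $U_i\phi$. So I would argue that the universally-quantified family of premises pins down $U_i\phi$ at the placeholder position, and then the induction on $\eta$ lifts this to $\eta(U_i\phi)$.

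The main obstacle I anticipate is the rule RU, specifically handling the universal quantification over all $\psi\in\EL$ correctly through the admissible-form induction. One has to be careful that the placeholder $\sharp$ may sit underneath $\K_i$ and $[\phi]$ operators, so the point at which $U_i\phi$ must be verified in the model is not the evaluation point $s$ but some accessible or post-update point; the induction must track this relocation of the evaluation world and confirm that the family $\{\eta(\phi\land[\psi]\neg\K_i\phi)\mid\psi\in\EL\}$, taken together, forces $U_i\phi$ at \emph{that} world rather than merely establishing each conjunct in isolation. The subtlety is that validity of each individual premise instance gives, at the relevant world, $\phi$ and every $[\psi]\neg\K_i\phi$ separately, and one must then invoke the semantic definition of $U_i$ to bundle these into $U_i\phi$ — the restriction of announced formulas to $\EL$ in that definition is exactly what makes this bundling match, avoiding circularity. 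Once that correspondence is set up, the verification of the remaining PAL axioms is routine and I would dispatch it by reference.
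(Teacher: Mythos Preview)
Your proposal is correct and follows essentially the same approach as the paper: the paper likewise dispatches the PAL axioms and rules by reference, checks AU directly from the semantic clause for $U_i$, and for RU proves the stronger pointwise statement that for every $(\M,s)$, if $\M,s\vDash\eta(\phi\land[\psi]\neg\K_i\phi)$ for all $\psi\in\EL$ then $\M,s\vDash\eta(U_i\phi)$, by induction on the structure of the admissible form $\eta(\sharp)$. Your discussion of tracking the evaluation point through $\K_i$ and $[\phi]$ and then bundling the $\EL$-indexed family into $U_i\phi$ via its semantic clause is exactly the content of that induction, spelled out in more detail than the paper gives.
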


\begin{proof}
We only consider the axiom AU and the inference rule RU. The validity of axiom AC is straightforward by semantics. For the soundness of RU, we need to show that if $\vDash \eta(\phi\land[\psi]\neg \K_i\phi)\text{ for all }\psi\in\EL$, then $\vDash \eta(U_i\phi)$. To do this, we show the following stronger result:
\begin{center}
$(\star)$~~~~~~for all $(\M,s)$, if $\M,s\vDash\eta(\phi\land[\psi]\neg \K_i\phi)\text{ for all }\psi\in\EL$, then $\M,s\vDash \eta(U_i\phi)$.
\end{center}
The proof proceeds by induction on the structure of admissible forms. The base case and the inductive cases $\phi\to\eta(\sharp)$ and $\K_i\eta(\sharp)$ and $[\phi]\eta(\sharp)$ follow from the semantics and the induction hypothesis.
\end{proof}

In the next section, we will show the completeness of $\mathbb{LUT}$, where the canonical model will be based on a notion of {\em maximal consistent theories}, rather than the more familiar notion of maximal consistent sets. This is because we need to use the condition of closure under RU, which is indispensible in the completeness proof.
\begin{definition}[theory, consistent theory, maximal theory, maximal consistent theory]
Given a set of formulas $s$, we say it is a {\em theory}, if the following holds:

(a) $\textbf{Thm}\subseteq s$, and 

(b) $s$ is closed under $\text{MP}$ and $\text{RU}$.\\
Moreover, if $\bot\notin s$, then $s$ is a {\em consistent}; if for all $\phi$, either $\phi\in s$ or $\neg\phi\in s$, then $s$ is a {\em maximal}; if $s$ is consistent and maximal, then $s$ is {\em maximally consistent}.
\end{definition}

One may check that $\textbf{Thm}$ is the smallest theory.

Define $s+\phi=\{\psi\mid \phi\to\psi\in s\}$.

\begin{proposition}\label{prop.theory-property}
Let $s$ be a theory and $\phi$ be a formula. Then 

(1) $s+\phi$ is a theory that contains $s\cup\{\phi\}$, and 

(2) $s+\phi$ is consistent iff $\neg\phi\notin s$.
\end{proposition}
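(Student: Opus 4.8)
The plan is to verify the two defining clauses of a theory for $s+\phi$ directly from the definitions, and then settle the consistency characterization by propositional reasoning carried out inside $s$. Throughout I would use freely that $\textbf{Thm}\subseteq s$, that $s$ is closed under MP and RU, and that every propositional tautology belongs to $\textbf{Thm}$ (via PL).

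For part (1), I would first establish the containment $s\cup\{\phi\}\subseteq s+\phi$. For $\phi$ itself, the tautology $\phi\to\phi$ lies in $\textbf{Thm}\subseteq s$, so $\phi\in s+\phi$. For an arbitrary $\psi\in s$, the tautology $\psi\to(\phi\to\psi)$ is in $s$, and closure under MP yields $\phi\to\psi\in s$, i.e. $\psi\in s+\phi$; in particular $\textbf{Thm}\subseteq s\subseteq s+\phi$, which is clause (a). For closure under MP, given $\phi\to\psi\in s$ and $\phi\to(\psi\to\chi)\in s$, I would invoke the tautology $(\phi\to(\psi\to\chi))\to((\phi\to\psi)\to(\phi\to\chi))$ and two applications of MP inside $s$ to conclude $\phi\to\chi\in s$, i.e. $\chi\in s+\phi$.

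The crux of part (1), and the step I expect to be the main obstacle, is closure under RU. Suppose $\eta(\phi'\land[\psi]\neg\K_i\phi')\in s+\phi$ for all $\psi\in\EL$, where I rename the unknowability-target to $\phi'$ to avoid a clash with the prefix $\phi$. Unfolding the definition of $s+\phi$, this says $\phi\to\eta(\phi'\land[\psi]\neg\K_i\phi')\in s$ for all $\psi$. The key observation is that $\phi\to\eta(\sharp)$ is again an admissible form, by the clause $\eta(\sharp)::=\phi\to\eta(\sharp)$ in the definition. Setting $\eta'(\sharp):=\phi\to\eta(\sharp)$, the hypothesis reads $\eta'(\phi'\land[\psi]\neg\K_i\phi')\in s$ for all $\psi\in\EL$, so closure of $s$ under RU delivers $\eta'(U_i\phi')=\phi\to\eta(U_i\phi')\in s$, which is exactly $\eta(U_i\phi')\in s+\phi$. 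Recognising that the prefix $\phi\to(\cdot)$ can be absorbed into the admissible form is what makes the argument go through, and it is precisely this closure property that motivated working with theories rather than maximal consistent sets.

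For part (2), I would reason propositionally in both directions. If $\bot\in s+\phi$ then $\phi\to\bot\in s$, and the tautology $(\phi\to\bot)\to\neg\phi$ with MP gives $\neg\phi\in s$; contrapositively, $\neg\phi\notin s$ forces $s+\phi$ to be consistent. Conversely, if $\neg\phi\in s$, the tautology $\neg\phi\to(\phi\to\bot)$ with MP gives $\phi\to\bot\in s$, i.e. $\bot\in s+\phi$, so $s+\phi$ is inconsistent. Combining the two directions yields the stated equivalence, so the only genuinely new ingredient beyond bookkeeping is the admissible-form observation flagged above.
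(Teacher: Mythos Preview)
Your proof is correct and follows essentially the same approach as the paper's: both verify the theory clauses directly, and both hinge on the observation that $\phi\to\eta(\sharp)$ is itself an admissible form so that closure of $s$ under RU transfers to $s+\phi$. The only cosmetic differences are the order of presentation (you do the containment $s\cup\{\phi\}\subseteq s+\phi$ first and then derive $\textbf{Thm}\subseteq s+\phi$ from it, whereas the paper checks $\textbf{Thm}\subseteq s+\phi$ directly) and that in part (2) the paper argues the ``$\neg\phi\in s\Rightarrow$ inconsistent'' direction inside $s+\phi$ using part (1) rather than inside $s$ as you do.
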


\begin{proof}
(1) Suppose that $s$ is a theory. First, we show that $s+\phi$ is a theory. 
\begin{itemize}
    \item Contains $\textbf{Thm}$: given any $\psi\in \textbf{Thm}$, we have $\phi\to\psi\in \textbf{Thm}$. Since $\textbf{Thm}\subseteq s$ (as $s$ is a theory), $\phi\to\psi\in s$, and thus $\psi\in s+\phi$.
    \item Closed under $\text{MP}$: assume that $\psi,\psi\to\chi\in s+\phi$, then $\phi\to \psi\in s$ and $\phi\to(\psi\to\chi)\in s$. Since $s$ contains $\textbf{Thm}$ and is closed under $\text{MP}$, we infer that $\phi\to\chi\in s$, and therefore $\chi\in s+\phi$.
    \item Closed under $\text{RU}$: suppose that $\eta(\chi\land[\psi]\neg\K_i\chi)\in s+\phi$ for all $\psi\in\EL$, then $\phi\to \eta(\chi\land[\psi]\neg\K_i\chi)\in s$. Because $s$ is closed under $\text{RU}$ for the admissible form $\phi\to\eta(\sharp)$, we derive that $\phi\to\eta(U_i\chi)\in s$, so $\eta(U_i\chi)\in s+\phi$.
\end{itemize}

Next, we show that $s\cup\{\phi\}\subseteq s+\phi$. Given any $\psi\in s\cup\{\phi\}$, either $\psi\in s$ or $\psi=\phi$. If $\psi\in s$, then as $\psi\to(\phi\to\psi)\in \textbf{Thm}$ and $\textbf{Thm}\subseteq s$, $\psi\to(\phi\to\psi)\in s$; since $s$ is closed under $\text{MP}$, $\phi\to\psi\in s$, that is, $\psi\in s+\phi$. If $\psi=\phi$, then since $\phi\to\phi\in s$, we immediately have $\phi\in s+\phi$, that is, $\psi\in s+\phi$. Therefore, $s\cup\{\phi\}\subseteq s+\phi$.

(2) Suppose that $\neg\phi\in s$, then by (1), $\neg\phi\in s+\phi$. Moreover, by (1) again, $\phi\in s+\phi$. We have also $\neg\phi\to (\phi\to \bot)\in s+\phi$. Since $s+\phi$ is closed under $\text{MP}$ (by (1) again), we conclude that $\bot\in s+\phi$, and therefore $s+\phi$ is inconsistent.

Conversely, assume that $s+\phi$ is inconsistent, then $\bot\in s+\phi$, and thus $\phi\to\bot\in s$. Since $(\phi\to\bot)\to \neg\phi\in s$ and $s$ is closed under $\text{MP}$, it follows that $\neg\phi\in s$.
\end{proof}

Lindenbaum’s Lemma can be proved as~\cite[Lem.4.12]{balbianietal:2008}.
\begin{lemma}[Lindenbaum's Lemma]\label{lem.lindenbaumlemma}
Every consistent theory can be extended to a maximal consistent theory.
\end{lemma}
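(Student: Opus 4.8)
The plan is to run the usual Lindenbaum/Henkin enumeration, adapted so that the limit set is closed under the infinitary rule RU, which is the only genuine difficulty. Fix an enumeration $\phi_0,\phi_1,\dots$ of all $\LUT$-formulas and start from the given consistent theory $s$, setting $s_0=s$. At stage $n$, given a consistent theory $s_n$, I would decide $\phi_n$ as follows. If $s_n+\phi_n$ is inconsistent, set $s_{n+1}=s_n$ (note that by Prop.~\ref{prop.theory-property}(2) we then already have $\neg\phi_n\in s_n$). If $s_n+\phi_n$ is consistent and $\phi_n$ is \emph{not} of the form $\neg\eta(U_i\chi)$, set $s_{n+1}=s_n+\phi_n$. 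The important case is when $s_n+\phi_n$ is consistent and $\phi_n=\neg\eta(U_i\chi)$ for some admissible form $\eta$, agent $i$, and formula $\chi$: here I would additionally append a \emph{witness}, setting $s_{n+1}=(s_n+\phi_n)+\neg\eta(\chi\land[\psi]\neg\K_i\chi)$ for a suitably chosen $\psi\in\EL$. By Prop.~\ref{prop.theory-property}(1) each $s_n$ is a theory, and an induction using Prop.~\ref{prop.theory-property}(2) keeps every $s_n$ consistent; the desired extension is $s_\omega=\bigcup_n s_n$.

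The key lemma to isolate is the \emph{existence of a witness}: if $t:=s_n+\neg\eta(U_i\chi)$ is consistent, then there is some $\psi\in\EL$ with $t+\neg\eta(\chi\land[\psi]\neg\K_i\chi)$ consistent. I would prove this by contradiction: if no such $\psi$ existed, then by Prop.~\ref{prop.theory-property}(2) we would have $\eta(\chi\land[\psi]\neg\K_i\chi)\in t$ for \emph{every} $\psi\in\EL$; since $t$ is a theory it is closed under RU, forcing $\eta(U_i\chi)\in t$, which contradicts the consistency of $t$ because $\neg\eta(U_i\chi)\in t$. This is exactly the place where closure of the intermediate theories under RU is used, and where it is essential that witnesses be keyed to admissible forms $\eta$ rather than only to the bare operator $U_i$.

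It then remains to verify that $s_\omega$ is a maximal consistent theory. Consistency is immediate, since $\bot\in s_\omega$ would put $\bot$ in some $s_n$; containment of $\textbf{Thm}$ is inherited from $s_0$; closure under MP follows because any two premises lie in a common $s_n$; and maximality holds because each $\phi_n$ is decided at stage $n$ (either $\phi_n$ is added or $\neg\phi_n$ already belongs to $s_n$). The main obstacle, closure of $s_\omega$ under RU, is dispatched using the witnesses: suppose $\eta(U_i\chi)\notin s_\omega$; by maximality $\neg\eta(U_i\chi)\in s_\omega$, so this formula equals $\phi_m$ for some $m$, and since it belongs to the consistent set $s_\omega$ it is consistent with $s_m$, whence the witness branch was taken at stage $m$. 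Thus $\neg\eta(\chi\land[\psi]\neg\K_i\chi)\in s_{m+1}\subseteq s_\omega$ for some $\psi$, so $\eta(\chi\land[\psi]\neg\K_i\chi)\notin s_\omega$, i.e.\ not all RU-premises for $\eta(U_i\chi)$ lie in $s_\omega$. Contrapositively, if all premises belong to $s_\omega$ then so does the conclusion, which is precisely closure under RU. I expect the witness-existence step, together with the bookkeeping confirming that every committed $\neg\eta(U_i\chi)$ genuinely triggered the witness branch, to be the only delicate points; everything else is the routine Lindenbaum argument.
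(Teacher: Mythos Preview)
Your proposal is correct and is essentially the same Henkin-style construction that the paper intends: the paper does not spell out its own argument but defers to \cite[Lem.~4.12]{balbianietal:2008}, whose proof is precisely the enumeration-with-witnesses method you describe, using Prop.~\ref{prop.theory-property} to keep each stage a consistent theory and exploiting closure under RU of the intermediate theories to guarantee a witness whenever $\neg\eta(U_i\chi)$ is added. One small point worth making explicit in a write-up is that the decomposition of a formula as $\neg\eta(U_i\chi)$ is unique (since $U_i$ is not among the admissible-form constructors, the admissible prefixes are stripped deterministically), so the witness added at stage $m$ really is a witness for the particular $(\eta,i,\chi)$ you need in the RU-closure verification.
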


\section{Completeness}\label{sec.completeness}

To demonstrate the completeness of $\mathbb{LUT}$, we adapt the method in~\cite{balbianietal:2015} for APAL. The point is as follows. In the proof of the truth lemma, the hardest part is the cases involving $U_i\phi$. To handle this, we need to `reduce' the case $U_i\phi$ to the easy-to-fix case $\phi\land[\psi]\neg\K_i\phi$ (where $\psi\in\EL$), via a formula-based complexity measure $<^S_d$ (defined below), and then use the $<^S_d$-induction hypothesis to deal with the cases involving $U_i\phi$ in the truth lemma. 

\begin{definition}[Canonical Model]
The {\em canonical model} for $\mathbb{LUT}$ is a tuple $\M^c=\lr{S^c,\{R^c_i\mid i\in\Ag\},V^c}$ where
\begin{itemize}
    \item $S^c=\{s\mid s\text{ is a maximal consistent theory for }\mathbb{LUT}\}$.
    \item $sR_i^ct$ iff $\lambda_i(s)\subseteq t$, where $\lambda_i(s)=\{\phi\mid \K_i\phi\in s\}$.
    \item $V^c(p)=\{s\in S^c\mid p\in s\}$.
\end{itemize}
\end{definition}

An equivalent definition of $R^c_i$ is as follows.
\begin{center}
    $sR^c_it$ iff for all $\phi$, if $\K_i\phi\in s$, then $\phi\in t$.
\end{center}

\begin{proposition}\label{prop.lambdasisatheory}
Where $s\in S^c$, $\lambda_i(s)$ is a consistent theory.
\end{proposition}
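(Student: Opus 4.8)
The plan is to show that $\lambda_i(s)$ satisfies the two closure conditions defining a theory, namely that it contains all theorems and is closed under the inference rules $\text{MP}$ and $\text{RU}$, and that it is consistent. First I would verify that $\textbf{Thm}\subseteq\lambda_i(s)$: given any theorem $\phi$, the rule $\text{GEN}$ yields $\vdash\K_i\phi$, so $\K_i\phi\in\textbf{Thm}\subseteq s$ (as $s$ is a theory), whence $\phi\in\lambda_i(s)$ by definition of $\lambda_i$.

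Next I would check closure under $\text{MP}$. Suppose $\phi\in\lambda_i(s)$ and $\phi\to\psi\in\lambda_i(s)$, so that $\K_i\phi\in s$ and $\K_i(\phi\to\psi)\in s$. Using axiom $\text{K}$, namely $\K_i(\phi\to\psi)\to(\K_i\phi\to\K_i\psi)$, together with the fact that $s$ contains $\textbf{Thm}$ and is closed under $\text{MP}$, I obtain $\K_i\psi\in s$, hence $\psi\in\lambda_i(s)$. For closure under $\text{RU}$, suppose $\eta(\phi\land[\psi]\neg\K_i\phi)\in\lambda_i(s)$ for all $\psi\in\EL$; then $\K_i\eta(\phi\land[\psi]\neg\K_i\phi)\in s$ for all such $\psi$. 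The key observation is that $\K_i\eta(\sharp)$ is itself an admissible form, so I can apply the closure of $s$ under $\text{RU}$ to this admissible form directly, yielding $\K_i\eta(U_i\phi)\in s$, i.e. $\eta(U_i\phi)\in\lambda_i(s)$. This is the step I expect to be the main subtlety: one must notice that the recursive clause ${\K_i}\eta(\sharp)$ in the definition of admissible forms is exactly what permits pushing $\text{RU}$ through the knowledge operator, so that $\lambda_i$ inherits $\text{RU}$-closure from $s$.

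Finally I would establish consistency. Since $s$ is maximal consistent, $\bot\notin s$; I want $\bot\notin\lambda_i(s)$. Suppose for contradiction that $\bot\in\lambda_i(s)$, i.e. $\K_i\bot\in s$. Applying axiom $\text{T}$, namely $\K_i\bot\to\bot$, and using that $s$ is closed under $\text{MP}$ and contains $\textbf{Thm}$, I derive $\bot\in s$, contradicting the consistency of $s$. Hence $\lambda_i(s)$ is a consistent theory, as required. All three parts are routine once the admissible-form observation in the $\text{RU}$ case is in place.
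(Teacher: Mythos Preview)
Your proof is correct and essentially identical to the paper's own argument: both verify containment of $\textbf{Thm}$ via $\text{GEN}$, closure under $\text{MP}$ via axiom $\text{K}$, closure under $\text{RU}$ via the key observation that $\K_i\eta(\sharp)$ is again an admissible form, and consistency via axiom $\text{T}$. The only cosmetic difference is that the paper phrases the consistency step contrapositively (from $\bot\notin s$ and $\text{T}$ infer $\K_i\bot\notin s$) whereas you argue by contradiction, which amounts to the same thing.
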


\begin{proof}
Suppose that $s\in S^c$. We show that $\lambda_i(s)$ is a consistent theory.
\begin{itemize}
    \item Contains $\textbf{Thm}$: given any $\phi\in \textbf{Thm}$, using $\text{GEN}$ we have $\K_i\phi\in \textbf{Thm}$. As $s\in S^c$, $\K_i\phi\in s$, hence $\phi\in\lambda_i(s)$.
    \item Closed under $\text{MP}$: suppose that $\phi,\phi\to\psi\in\lambda_i(s)$, then $\K_i\phi\in s$ and $\K_i(\phi\to\psi)\in s$. By axiom K and $s\in S^c$, we can obtain that $\K_i\psi\in s$, and therefore $\psi\in \lambda_i(s)$.
    \item Closed under $\text{RU}$: assume that $\eta(\phi\land[\psi]\neg\K_i\phi)\in \lambda_i(s)$ for all $\psi\in \EL$, to show that $\eta(U_i\phi)\in \lambda_i(s)$. By assumption, $\K_i\eta(\phi\land[\psi]\neg\K_i\phi)\in s$. Since $s$ is closed under $\text{RU}$ for the admissible form $\K_i\eta(\sharp)$, it follows that $\K_i\eta(U_i\phi)\in s$. Therefore, $\eta(U_i\phi)\in \lambda_i(s)$.
    \item Contains {\em no} $\bot$: since $s\in S^c$, $\bot\notin s$, which plus the axiom T implies that $\K_i\bot \notin s$, and thus $\bot\notin \lambda_i(s)$.
\end{itemize}
\end{proof}

\begin{proposition}\label{prop.equivalent} Let $s\in S^c$. The following conditions are equivalent.

(1) $\K_i\psi\in s$,

(2) for all $t\in S^c$, if $sR^c_it$, then $\psi\in t$.
\end{proposition}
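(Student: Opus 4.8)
The plan is to prove the two directions separately, with $(1)\Rightarrow(2)$ immediate from the definition of $R^c_i$ and $(2)\Rightarrow(1)$ handled by an existence-lemma argument resting on Lindenbaum's Lemma.

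For $(1)\Rightarrow(2)$ I would argue directly. Suppose $\K_i\psi\in s$. Then $\psi\in\lambda_i(s)$ by the definition of $\lambda_i$. Now take any $t\in S^c$ with $sR^c_it$; by definition this means $\lambda_i(s)\subseteq t$, whence $\psi\in t$. This step requires nothing beyond unfolding the definitions of $\lambda_i(s)$ and $R^c_i$.

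For $(2)\Rightarrow(1)$ I would prove the contrapositive: assuming $\K_i\psi\notin s$, I construct a witness $t\in S^c$ with $sR^c_it$ but $\psi\notin t$, which refutes $(2)$. By Prop.~\ref{prop.lambdasisatheory}, $\lambda_i(s)$ is a consistent theory, so I may apply the $+$-machinery to it. Consider the theory $\lambda_i(s)+\neg\psi$. By Prop.~\ref{prop.theory-property}(2) (taken with theory $\lambda_i(s)$ and formula $\neg\psi$), it is consistent iff $\neg\neg\psi\notin\lambda_i(s)$; since $\lambda_i(s)$ is a theory containing all theorems and closed under MP, $\neg\neg\psi\in\lambda_i(s)$ iff $\psi\in\lambda_i(s)$ iff $\K_i\psi\in s$, so the assumption $\K_i\psi\notin s$ guarantees that $\lambda_i(s)+\neg\psi$ is consistent. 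By Lindenbaum's Lemma (Lemma~\ref{lem.lindenbaumlemma}) I extend it to a maximal consistent theory $t\in S^c$. By Prop.~\ref{prop.theory-property}(1) we have $\lambda_i(s)\cup\{\neg\psi\}\subseteq\lambda_i(s)+\neg\psi\subseteq t$, so in particular $\lambda_i(s)\subseteq t$, i.e.\ $sR^c_it$, while $\neg\psi\in t$ together with the consistency of $t$ yields $\psi\notin t$. This $t$ witnesses the failure of $(2)$, completing the contrapositive.

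The real work has already been offloaded to the earlier results, so the argument is essentially routine. The only subtlety worth flagging is that the entire construction must remain within the realm of \emph{theories}, that is, sets closed under RU as well as MP, rather than arbitrary maximal consistent sets; this is exactly why Prop.~\ref{prop.theory-property} was established for the operation $s+\phi$ preserving the theory structure, and why Lindenbaum's Lemma is phrased for theories. I expect no genuine obstacle beyond invoking these facts carefully; the one place demanding attention is the double-negation step, where translating $\neg\neg\psi\notin\lambda_i(s)$ into $\K_i\psi\notin s$ uses that $\lambda_i(s)$ contains all theorems and is closed under MP, rather than treating membership as a purely syntactic matter.
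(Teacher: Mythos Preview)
Your proposal is correct and follows essentially the same route as the paper: the forward direction is immediate from the definition of $R^c_i$, and the converse is the standard existence-lemma argument via $\lambda_i(s)+\neg\psi$, Prop.~\ref{prop.theory-property}, and Lindenbaum's Lemma. The paper's proof is organized identically, including the double-negation step you flag as the only point needing care.
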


\begin{proof}
The direction from (1) to (2) is straightforward by definition of $R^c_i$. For the converse, suppose that (1) fails (that is, $\K_i\psi\notin s$), to show that (2) fails, that is, for some $t\in S^c$, $sR^c_it$ and $\psi\notin t$.

By Prop.~\ref{prop.lambdasisatheory}, $\lambda_i(s)$ is a theory. Then by the item (1) of Prop.~\ref{prop.theory-property}, $\lambda_i(s)+\neg\psi$ is a theory and $\lambda_i(s)\cup\{\neg\psi\}\subseteq \lambda_i(s)+\neg\psi$. Moreover, by supposition, we infer that $\psi\notin \lambda_i(s)$. By Prop.~\ref{prop.lambdasisatheory}, we can obtain that $\neg\neg\psi\notin \lambda_i(s)$. Now using the item (2) of Prop.~\ref{prop.theory-property}, we derive that $\lambda_i(s)+\neg\psi$ is consistent. We have now shown that $\lambda_i(s)+\neg\psi$ is a consistent theory. Then by Lindenbaum's Lemma (Lemma~\ref{lem.lindenbaumlemma}), there exists $t\in S^c$ such that $\lambda_i(s)+\neg\psi\subseteq t$. This implies that $\lambda_i(s)\cup\{\neg\psi\}\subseteq t$. From $\lambda_i(s)\subseteq t$, it follows that $sR^c_it$; from $\neg\psi\in t$ and $t\in S^c$, it follows that $\psi\notin t$, as desired.
\end{proof}

As mentioned before, in the proof of the truth lemma below, we need to use a formula-based complexity measure to `reduce' the hard case $U_i\phi$ to the easy-to-fix case $\phi\land[\psi]\neg\K_i\phi$ (where $\psi\in\EL$). Now we introduce the notion of complexity, which combines the notions of size and $U$-depth.
\begin{definition}[Size, $U$-depth, Complexity]\label{def.complexity}

The {\em complexity} of a formula consists of two aspects: size and $U$-depth, which are defined as follows.

The {\em size} of a formula $\phi$, denoted $S(\phi)$, is a positive natural number, defined recursively as follows.
\[
\begin{array}{lll}
    S(p) & = & 1 \\
    S(\neg\phi) & = & S(\phi)+1\\
    S(\phi\land\psi) & = & S(\phi)+S(\psi)+1\\
    S(\K_i\phi) & = & S(\phi)+ 1\\
    S([\phi]\psi) & = & (5+S(\phi))\cdot S(\psi)\\
    S(U_i\phi) & = & S(\phi)+1\\
\end{array}
\]

The {\em $U$-depth} of a formula $\phi$, denoted $d_U(\phi)$, is a natural number, defined recursively as follows.
\[
\begin{array}{lll}
    d_U(p) & = & 0 \\
    d_U(\neg\phi) & = & d_U(\phi)\\
    d_U(\phi\land\psi) & = & max\{d_U(\phi),d_U(\psi)\}\\
    d_U(\K_i\phi) & = & d_U(\phi)\\
    d_U([\phi]\psi) & = & d_U(\phi)+d_U(\psi)\\
    d_U(U_i\phi) & = & d_U(\phi)+1\\
\end{array}
\]

We say that $\phi$ is {\em less complex than} $\psi$, denoted $c(\phi)<c(\psi)$, if either $d_U(\phi)<d_U(\psi)$, or $d_U(\phi)=d_U(\psi)$ and $S(\phi)<S(\psi)$.
\end{definition}

One may easily check that $S(\phi\to\psi)=S(\phi)+S(\psi)+3$ and $d_U(\phi\to\psi)=max\{d_U(\phi),d_U(\psi)\}$. Moreover, $d_U(\phi)=0$ for all $\phi\in\EL$.

Note that our definition of $S(\phi\land\psi)$ is different from that in~\cite{balbianietal:2015,DitmarschF16,Fan:2016}. The reason of so defining is that, for us, it is more suitable for the intuition of the size of a conjunction. Also note that the number 5 in the definition of $S([\phi]\psi)$ is the least natural number that gives us the following properties (more precisely, (4) and (5) in Prop.~\ref{prop.properties}), in contrast to the least natural number 4 in the complexity measure for public announcement logic in~\cite[Chap.~7]{hvdetal.del:2007}.

\begin{proposition}\label{prop.properties}\
\begin{enumerate}
    \item[(1)] $c(\psi)<c(\neg\psi)$
    \item[(2)] $c(\psi)<c(\psi\land\chi)$, $c(\chi)<c(\psi\land\chi)$.
    \item[(3)] $c(\psi)<c(\K_i\psi)$
    \item[(4)] $c(\psi\to p)< c([\psi]p)$
    \item[(5)] $c(\psi\to\neg[\psi]\chi)<c([\psi]\neg\chi)$
    \item[(6)] $c([\psi]\chi_1)<c([\psi](\chi_1\land\chi_2))$, $c([\psi]\chi_2)<c([\psi](\chi_1\land\chi_2))$
    \item[(7)] $c(\psi\to{\K_i[\psi]\chi})<c({[\psi]\K_i}\chi)$
    \item[(8)] $c([\psi\land[\psi]\chi]\delta)<c([\psi][\chi]\delta)$
    \item[(9)] $c([\psi]\chi)<c([\psi]U_i\chi)$,$c([\psi]{[\delta]\neg\K_i}\chi)<c([\psi]U_i\chi)$, where $\delta\in\EL$.
    \item[(10)] $c(\psi)<c(U_i\psi)$, $c({[\chi]\neg\K_i}\psi)<c(U_i\psi)$, where $\chi\in\EL$.
\end{enumerate}
\end{proposition}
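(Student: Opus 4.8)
The plan is to verify each of the ten items by unfolding the two definitions in Definition~\ref{def.complexity} and exploiting the lexicographic nature of the order: since $c(\phi)<c(\psi)$ holds as soon as $d_U(\phi)<d_U(\psi)$, for every item I would first compare the two $U$-depths, and only when they coincide descend to a comparison of sizes. Before starting I would record the derived identities noted after Definition~\ref{def.complexity}, namely $S(\phi\to\psi)=S(\phi)+S(\psi)+3$ and $d_U(\phi\to\psi)=\max\{d_U(\phi),d_U(\psi)\}$, together with $d_U(\delta)=0$ for every $\delta\in\EL$; these are exactly the facts that make the announcement-of-$\EL$ clauses behave well.

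The structural items (1)--(3) and the unknowability items (9)--(10) are the quick ones. For (1)--(3) the operators $\neg$, $\land$, and $\K_i$ leave the $U$-depth unchanged (for $\land$ it can only rise), while each strictly increases the size by at least $1$, so $c$ increases and the claims follow directly from the definitions. Items (9) and (10) are settled by the $U$-depth alone: because $d_U(U_i\chi)=d_U(\chi)+1$, whereas prefixing an $\EL$-announcement $[\delta]$ or $[\chi]$ (which has $U$-depth $0$) leaves the $U$-depth untouched, the right-hand side of each inequality has strictly larger $U$-depth than the left, and no size computation is needed. This is precisely where the syntactic restriction $\delta,\chi\in\EL$ in rule RU and axiom AU pays off.

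The real work is in the public-announcement reduction items (4)--(8), where the $U$-depths always coincide, so everything comes down to a size inequality obtained from $S([\phi]\psi)=(5+S(\phi))\cdot S(\psi)$. For (4), (5), and (7) I would expand both sides; after the common term—$S(\psi)$ in (4), and $(5+S(\psi))\cdot S(\chi)$ in (5) and (7)—cancels, each inequality collapses to $4<5$, which is exactly why the constant must be $5$ and no smaller, as the comment preceding the proposition anticipates. Item (6) is immediate once the $U$-depths are equal, since $S([\psi](\chi_1\land\chi_2))=(5+S(\psi))\cdot(S(\chi_1)+S(\chi_2)+1)$ strictly dominates $(5+S(\psi))\cdot S(\chi_i)$ for each $i$.

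I expect the composition item (8) to be the main obstacle, being the only one that needs a genuine two-variable estimate rather than a cancellation down to $4<5$. Writing $a=S(\psi)$ and $b=S(\chi)$ and dividing through by the positive factor $S(\delta)$, the claim reduces to $6+a+(5+a)b<(5+a)(5+b)$; since $(5+a)(5+b)=25+5a+(5+a)b$, cancelling the shared term $(5+a)b$ leaves $6+a<25+5a$, a surplus of $19+4a>0$, which holds as $a\ge 1$. The one place demanding care is keeping track of the nested product in $S([\psi\land[\psi]\chi]\delta)$; once that is expanded correctly the estimate is routine, and the remaining items are straightforward unfoldings.
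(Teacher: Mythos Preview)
Your proposal is correct and follows the same approach as the paper, which proves a representative selection of items—(4), (5), (7), (8), (10)—by exactly the method you outline: verifying that the $U$-depths agree and then comparing sizes for the announcement-reduction cases, and observing a strict $U$-depth drop for the $U_i$ cases (9)--(10). One small wording caveat: in item (6) the $U$-depths need not coincide (take $d_U(\chi_1)<d_U(\chi_2)$), contrary to your blanket claim for (4)--(8), but in that case the left-hand side has strictly smaller depth and the conclusion is even easier, so your argument is unaffected.
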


\begin{proof}
We show some of them as examples.

(4) One may easily verify that $d_U(\psi\to p)=d_U(\psi)=d_U([\psi]p)$. Moreover, $S(\psi\to p)=S(\psi)+S(p)+3=S(\psi)+4<5+S(\psi)=S([\psi]p)$. Thus $c(\psi\to p)<c([\psi]p)$.

(5) One may easily verify that $d_U(\psi\to\neg[\psi]\chi)=d_U(\psi)+d_U(\chi)=d_U([\psi]\neg\chi)$. Moreover, $S(\psi\to\neg[\psi]\chi)=S(\psi)+5\cdot S(\chi)+S(\psi)\cdot S(\chi)+4<S(\psi)+5\cdot S(\chi)+S(\psi)\cdot S(\chi)+5=S([\psi]\neg\chi)$. Therefore, $c(\psi\to\neg[\psi]\chi)<c([\psi]\neg\chi)$.

(7) One may easily verify that $d_U(\psi\to{\K_i[\psi]}\chi)=d_U(\psi)+d_U(\chi)=d_U({[\psi]\K_i}\chi)$. Moreover, $S(\psi\to{\K_i[\psi]}\chi)=S(\psi)+5\cdot S(\chi)+S(\psi)\cdot S(\chi)+4<S(\psi)+5\cdot S(\chi)+S(\psi)\cdot S(\chi)+5=S({[\psi]\K_i}\chi)$. Therefore, $c(\psi\to{\K_i[\psi]}\chi)<c({[\psi]\K_i}\chi)$.

(8) One may easily verify that $d_U([\psi\land[\psi]\chi]\delta)=d_U(\psi)+d_U(\chi)+d_U(\delta)=d_U([\psi][\chi]\delta)$. Moreover, $S([\psi\land[\psi]\chi]\delta)=6\cdot S(\delta)+S(\psi)\cdot S(\delta)+ 5\cdot S(\chi)\cdot S(\delta)+S(\psi)\cdot S(\chi)\cdot S(\delta)$, and $S([\psi][\chi]\delta)=25\cdot S(\delta)+5\cdot S(\psi)\cdot S(\delta)+5\cdot S(\chi)\cdot S(\delta)+S(\psi)\cdot S(\chi)\cdot S(\delta)$, thus $S([\psi\land[\psi]\chi]\delta)<S([\psi][\chi]\delta)$. Therefore, $c([\psi\land[\psi]\chi]\delta)<c([\psi][\chi]\delta)$.


(10) Since $d_U(\psi)< d_U(\psi)+1= d_U(U_i\psi)$, we obtain $c(\psi)<c(U_i\psi)$. Also, one may verify that $d_U({[\chi]\neg\K_i}\psi)=d_U(\psi)<d_U(\psi)+1=d_U(U_i\psi)$, thus $c({[\chi]\neg\K_i}\psi)<c(U_i\psi)$.
\end{proof}

Now we are ready to show the truth lemma.
\begin{lemma}
For all $s\in S^c$, for all $\phi\in \textbf{LUT}$, we have
$$\M^c,s\vDash\phi\text{ iff }\phi\in s.$$
\end{lemma}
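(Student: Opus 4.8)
The plan is to prove the lemma by well-founded induction on the complexity measure $c$ of Definition~\ref{def.complexity}, where $c(\phi)<c(\psi)$ means $d_U(\phi)<d_U(\psi)$, or $d_U(\phi)=d_U(\psi)$ and $S(\phi)<S(\psi)$. Well-foundedness of this lexicographic order on $(U\text{-depth},\text{size})$ is what legitimises the induction, and Proposition~\ref{prop.properties} supplies exactly the strict decreases needed at every reduction step. First I would dispatch the routine cases. The atomic case $\phi=p$ is immediate from the definition of $V^c$; the Boolean cases $\neg\psi$ and $\psi\land\psi'$ follow from the maximal consistency of $s$ together with the induction hypothesis, justified by items (1) and (2); and the case $\K_i\psi$ follows from Proposition~\ref{prop.equivalent} and the induction hypothesis applied to $\psi$, justified by item (3).

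The announcement cases $[\psi]\theta$ I would handle by a secondary (and exhaustive) case analysis on the main connective of $\theta$. When $\theta$ is a proposition letter, a negation, a conjunction, a $\K_i$-formula, or itself an announcement, I would invoke the matching reduction axiom (AP, AN, AC, AK, AA) to rewrite $[\psi]\theta$ as a provably equivalent but strictly $c$-smaller formula by items (4)--(8); the induction hypothesis on the reduced formula, combined with the fact that $s$ is a theory (so it contains all theorems and is closed under MP), transfers the biconditional back to $[\psi]\theta$. The observation driving the whole measure is that these $\PAL$-style reductions may enlarge size but never raise $U$-depth, so $c$ drops because size drops while $U$-depth is preserved.

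The genuinely new work lies in the two cases that create or carry a $U_i$ past an announcement, for which there is no single reduction axiom: the bare $U_i\chi$ and the announced $[\psi]U_i\chi$. For $U_i\chi$ I would first argue semantically that $\M^c,s\vDash U_i\chi$ holds iff $\M^c,s\vDash\chi$ and $\M^c,s\vDash[\delta]\neg\K_i\chi$ for every $\delta\in\EL$; by item (10) both $\chi$ and each $[\delta]\neg\K_i\chi$ have strictly smaller $U$-depth, so the induction hypothesis turns truth into membership. It then remains to establish the purely syntactic equivalence: $U_i\chi\in s$ iff $\chi\in s$ and $[\delta]\neg\K_i\chi\in s$ for all $\delta\in\EL$. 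The left-to-right direction is axiom AU, and the right-to-left direction is precisely where closure under RU is indispensable, applied with the trivial admissible form $\eta(\sharp)=\sharp$ after collecting the conjunctions $\chi\land[\delta]\neg\K_i\chi\in s$ via PL and MP. The case $[\psi]U_i\chi$ runs in parallel but with the admissible form $\eta(\sharp)=[\psi]\sharp$: semantically $\M^c,s\vDash[\psi]U_i\chi$ iff $\M^c,s\vDash[\psi]\chi$ and $\M^c,s\vDash[\psi][\delta]\neg\K_i\chi$ for all $\delta\in\EL$, both strictly $c$-smaller by item (9); syntactically the forward direction follows by propagating AU under the announcement through GENA and KA (with PL), while the backward direction recombines the two families of conjuncts by AC and then invokes closure under RU under the context $[\psi]\sharp$.

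The step I expect to be the main obstacle is exactly this last family of $U$-involving cases, and in particular the interlocking of the complexity bookkeeping with the admissible-form requirement of RU. One must check that pushing the definition of $U_i$ outward --- to $\chi$ and $[\delta]\neg\K_i\chi$, or, under an outer announcement, to $[\psi]\chi$ and $[\psi][\delta]\neg\K_i\chi$ --- strictly lowers $U$-depth even though the announcement can inflate size, and that the context shapes $\sharp$ and $[\psi]\sharp$ are genuinely admissible forms under which a theory is closed by RU. If either the measure failed to decrease or the surrounding context were not an admissible form, the induction would collapse; the primacy of $U$-depth over size in $c$ and the shape of the admissible forms in RU are tailored precisely to make both conditions hold.
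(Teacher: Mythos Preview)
Your proposal is correct and follows essentially the same approach as the paper: a well-founded induction on the lexicographic complexity $c$, with the announcement cases handled by a secondary case analysis on the main connective of the announced formula via the reduction axioms AP--AA and items (4)--(8), and the two $U_i$-cases handled semantically followed by the syntactic equivalence via AU in one direction and closure under RU (with admissible forms $\sharp$ and $[\psi]\sharp$) in the other. The paper's proof is organised identically, including the use of GENA and KA to propagate AU under an announcement in the $[\psi]U_i\chi$ case.
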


\begin{proof}
By induction on the complexity of $\phi$. Recall that the notion of complexity is given in Def.~\ref{def.complexity}.
\begin{itemize}
    \item $\phi=p\in\BP$. Straightforward by the definition of $V^c$.
    \item $\phi=\neg\psi$. By the item (1) of Prop.~\ref{prop.properties}, $c(\psi)<c(\phi)$. Then by induction hypothesis, we have that $\M^c,s\vDash\psi$ iff $\psi\in s$. Thus $\M^c,s\vDash\phi$ iff $\M^c,s\nvDash\psi$ iff $\psi\notin s$ iff $\phi\in s$.
    \item $\phi=\psi\land\chi$. By the item (2) of Prop.~\ref{prop.properties}, $c(\psi),c(\chi)<c(\phi)$. Then by induction hypothesis, $\M^c,s\vDash\psi$ iff $\psi\in s$, and $\M^c,s\vDash \chi$ iff $\chi\in s$. Therefore, $\M^c,s\vDash \phi$ iff ($\M^c,s\vDash\psi$ and $\M^c,s\vDash\chi$) iff ($\psi\in s$ and $\chi\in s$) iff $\phi\in s$.
    \item $\phi=\K_i\psi$. By the item (3) of Prop.~\ref{prop.properties}, $c(\psi)<c(\phi)$. Then we have the following equivalences.
    \[
    \begin{array}{lll}
        \M^c,s\vDash\K_i\psi & \iff &  \text{for all }t\in S^c,\text{ if }sR^c_it, \text{ then }\M^c,t\vDash\psi \\
         & \iff & \text{for all }t\in S^c,\text{ if }sR^c_it, \text{ then }\psi\in t \\
         & \iff & \K_i\psi\in s,\\
    \end{array}
    \]
where the penultimate equivalence holds by induction hypothesis, whereas the last one has been shown in Prop.~\ref{prop.equivalent}.
    \item $\phi=[\psi]p$. By the item (4) of Prop.~\ref{prop.properties}, $c(\psi\to p)<c(\phi)$. Then we have the following equivalences.
    \[
    \begin{array}{lll}
        \M^c,s\vDash [\psi]p &\iff & \M^c,s\vDash \psi\to p \\
         & \iff & \psi\to p\in s\\
         & \iff & [\psi]p\in s,\\
    \end{array}
    \]
where the penultimate equivalence holds by induction hypothesis, whereas other two equivalences holds by axiom $\text{AP}$.
    \item $\phi=[\psi]\neg\chi$. By the item (5) of Prop.~\ref{prop.properties}, $c(\psi\to\neg[\psi]\chi)<c(\phi)$. Then we have
    \[
    \begin{array}{lll}
        \M^c,s\vDash[\psi]\neg\chi & \iff & \M^c,s\vDash\psi\to \neg[\psi]\chi \\
         & \iff & \psi\to\neg[\psi]\chi\in s \\
         & \iff & [\psi]\neg\chi\in s,\\
    \end{array}
    \]
where the penultimate equivalence holds by induction hypothesis, whereas other two equivalences holds by axiom $\text{AN}$.
    \item $\phi=[\psi](\chi_1\land\chi_2)$. By the item (6) of Prop.~\ref{prop.properties}, $c([\psi]\chi_1),c([\psi]\chi_2)<c(\phi)$. Then we have
    \[
    \begin{array}{lll}
        \M^c,s\vDash [\psi](\chi_1\land\chi_2) & \iff & \M^c,s\vDash [\psi]\chi_1\text{ and }\M^c,s\vDash[\psi]\chi_2 \\
         & \iff & [\psi]\chi_1\in s\text{ and }[\psi]\chi_2\in s \\
         & \iff & [\psi](\chi_1\land\chi_2)\in s,\\
    \end{array}
    \]
where the penultimate equivalence holds by induction hypothesis, whereas other two equivalences holds by axiom $\text{AC}$.
    \item $\phi={[\psi]\K_i}\chi$. By the item (7) of Prop.~\ref{prop.properties}, $c(\psi\to{\K_i[\psi]}\chi)<c(\phi)$. Then we have
    \[
    \begin{array}{lll}
        \M^c,s\vDash{[\psi]\K_i}\chi & \iff & \M^c,s\vDash (\psi\to{\K_i[\psi]}\chi) \\
        & \iff & \psi\to{\K_i[\psi]}\chi\in s\\
        & \iff & {[\psi]\K_i}\chi\in s,\\
    \end{array}
    \]
where the penultimate equivalence holds by induction hypothesis, whereas other two equivalences holds by axiom $\text{AK}$.
    \item $\phi=[\psi][\chi]\delta$. By the item (8) of Prop.~\ref{prop.properties}, $c([\psi\land[\psi]\chi]\delta)<c(\phi)$. Then we have
    \[
    \begin{array}{lll}
        \M^c,s\vDash[\psi][\chi]\delta & \iff & \M^c,s\vDash[\psi\land[\psi]\chi]\delta \\
        & \iff & [\psi\land[\psi]\chi]\delta\in s\\
         & \iff & [\psi][\chi]\delta\in s, \\
    \end{array}
    \]
where the penultimate equivalence holds by induction hypothesis, whereas other two equivalences holds by axiom $\text{AA}$. 
    \item $\phi=[\psi]U_i\chi$. By the item (9) of Prop.~\ref{prop.properties}, $c([\psi]\chi)<c([\psi]U_i\chi)$,$c([\psi]{[\delta]\neg\K_i}\chi)<c([\psi]U_i\chi)$, where $\delta\in\EL$. Then we have
    \[
    \begin{array}{lll}
        \M^c,s\vDash[\psi]U_i\chi & \iff & \M^c,s\vDash\psi\text{ implies }\M^c|_\psi,s\vDash U_i\chi \\
        & \iff & \M^c,s\vDash\psi\text{ implies }(\M^c|_\psi,s\vDash\chi\text{ and for all }\delta\in\EL,\M^c|_\psi,s\vDash{[\delta]\neg\K_i}\chi)\\
        & \iff & (\M,s\vDash\psi\text{ implies }\M^c|_\psi,s\vDash\chi) \text{ and }\\
        & & (\M,s\vDash\psi\text{ implies for all }\delta\in\EL,\M^c|_\psi,s\vDash{[\delta]\neg\K_i}\chi)\\
        & \iff & \M,s\vDash[\psi]\chi\text{ and for all }\delta\in\EL,\M^c,s\vDash[\psi]{[\delta]\neg\K_i}\chi\\
        &\stackrel{(\ast)}{\iff} & [\psi]\chi\in s\text{ and for all }\delta\in\EL, [\psi]{[\delta]\neg\K_i}\chi\in s\\
        & \stackrel{(\ast\ast)}{\iff} & [\psi](\chi\land{[\delta]\neg\K_i}\chi)\in s\text{ for all }\delta\in\EL\\
        & \stackrel{(\ast\ast\ast)}{\iff} & [\psi]U_i\chi\in s,\\
    \end{array}
    \]
where $(\ast)$ holds by induction hypothesis, $(\ast\ast)$ holds due to axiom $\text{AC}$ and $s\in S^c$, and $(\ast\ast\ast)$ uses the fact that $s$ is closed under $\text{RU}$ for the admissible form $[\psi]\sharp$, axiom $\text{AU}$, and the rule $\text{RM}[\cdot]$, which is in turn derivable from the axiom KA and the rule GENA.
    \item $\phi=U_i\psi$. By the item (10) of Prop.~\ref{prop.properties}, $c(\psi)<c(U_i\psi)$, $c({[\chi]\neg\K_i}\psi)<c(U_i\psi)$, where $\chi\in\EL$. Then we have
    \[
    \begin{array}{lll}
        \M^c,s\vDash U_i\psi & \iff & \M^c,s\vDash\psi\text{ and for all }\chi\in\EL,\M^c,s\vDash{[\chi]\neg\K_i}\psi \\
         & \iff & \psi\in s\text{ and }{[\chi]\neg\K_i}\psi\in s\text{ for all }\chi\in\EL \\
         & \iff & \psi\land{[\chi]\neg\K_i}\psi\in s\text{ for all }\chi\in\EL\\
         & \iff & U_i\psi\in s,\\
    \end{array}
    \]
where the second equivalence holds by induction hypothesis, the third one uses the closure of $s$ under $\text{MP}$, and the last one uses the closure of $s$ under $\text{RU}$ for the admissible form $\sharp$ and the axiom $\text{AU}$.
\end{itemize}
\end{proof}

It is now a standard exercise to show the following result.
\begin{theorem}\label{thm.comp-slut}
$\mathbb{LUT}$ is sound and complete with respect to the class of all frames. That is, if $\vDash\phi$, then $\vdash\phi$.
\end{theorem}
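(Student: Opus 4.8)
The soundness direction is already in hand from the earlier soundness proposition, so the plan is to establish completeness, which we take in the contrapositive form: if $\nvdash\phi$ then $\not\vDash\phi$. The argument rests entirely on the machinery already developed, namely the canonical model $\M^c$, the theory-extension facts (Prop.~\ref{prop.theory-property}), Lindenbaum's Lemma (Lemma~\ref{lem.lindenbaumlemma}), and above all the Truth Lemma just proved.

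The first step is to verify that the canonical frame lies in the intended class, i.e. that each $R^c_i$ is reflexive. This is where axiom T does its work: if $\K_i\psi\in s$ then $\psi\in s$, because $s\in S^c$ contains $\textbf{Thm}$ and is closed under MP, so $\lambda_i(s)\subseteq s$, that is, $sR^c_is$. This step is small but essential, since validity is defined only over reflexive frames, so the counter-model must itself be reflexive to be admissible. Next, assuming $\nvdash\phi$, that is $\phi\notin\textbf{Thm}$, I would note that $\neg\neg\phi\notin\textbf{Thm}$ as well (by PL and closure under MP), so Prop.~\ref{prop.theory-property}(2) gives that $\textbf{Thm}+\neg\phi$ is consistent, while Prop.~\ref{prop.theory-property}(1) gives that it is a theory containing $\neg\phi$. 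By Lindenbaum's Lemma it extends to a maximal consistent theory $s\in S^c$ with $\neg\phi\in s$, whence $\phi\notin s$ by maximal consistency. The Truth Lemma then yields $\M^c,s\nvDash\phi$, and since $\M^c$ is reflexive this exhibits a reflexive pointed model falsifying $\phi$; hence $\not\vDash\phi$, and contraposition delivers completeness.

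Because the Truth Lemma, which is the genuinely hard component with its complexity-measure induction reducing the case $U_i\phi$ to the case $\phi\land[\psi]\neg\K_i\phi$ for $\psi\in\EL$, is already established, no substantial obstacle remains. The only points that demand attention are the verification of reflexivity of $R^c_i$ via axiom T, so that the refuting model is of the correct kind for the relevant notion of validity, and the bookkeeping in passing from $\nvdash\phi$ to a consistent theory containing $\neg\phi$ through Prop.~\ref{prop.theory-property}; both are routine given the preceding results, which is precisely why the statement can be dispatched as a standard exercise.
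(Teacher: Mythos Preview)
Your proposal is correct and fills in precisely the standard argument the paper has in mind when it says ``It is now a standard exercise to show the following result.'' The reflexivity check via axiom T, the passage from $\nvdash\phi$ to a consistent theory $\textbf{Thm}+\neg\phi$ via Prop.~\ref{prop.theory-property}, the Lindenbaum extension, and the appeal to the Truth Lemma are exactly the expected steps.
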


With the completeness in hand, we can give a syntactic proof for Fitch's paradox of knowability, which is much simpler than those in the literature, e.g. in~\cite{Routley:2010}.
\begin{proposition}
If $\vdash\neg U_i\phi$ for all $\phi$, then $\vdash \neg{\bullet_i\phi}$ for all $\phi$.
\end{proposition}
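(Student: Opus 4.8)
The plan is to prove this conditional by refuting its antecedent. Since the statement has the form ``if $\vdash\neg U_i\phi$ for all $\phi$, then \dots'', it suffices to exhibit a single formula $\phi$ for which $\nvdash\neg U_i\phi$: once the hypothesis fails, the conditional holds vacuously. This is the exact syntactic counterpart of the semantic argument already given, where the premise $\vDash\neg U_i\phi$ for all $\phi$ was discharged by noting that there are unknowable truths, i.e. that $\neg U_i\phi$ fails for some $\phi$.

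To transfer the semantic fact into a syntactic one, I would invoke the soundness of $\mathbb{LUT}$ (established above) in its contrapositive form: if $\nvDash\psi$ then $\nvdash\psi$. Hence it is enough to produce a $\phi$ with $\nvDash\neg U_i\phi$, equivalently one for which $U_i\phi$ is satisfiable. The natural witness is $\phi=\bullet_i p$. By Prop.~\ref{prop.unknowntruthsareunknowable} we have $\vDash\bullet_i\phi\to U_i\bullet_i\phi$; taking the pointed model from the proof of Prop.~\ref{prop.unknowabletounknown}, at whose root $\bullet_i p$ holds, we obtain a pointed model satisfying $U_i\bullet_i p$. Thus $U_i\bullet_i p$ is satisfiable, so $\nvDash\neg U_i\bullet_i p$, and by soundness $\nvdash\neg U_i\bullet_i p$. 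The antecedent of the proposition therefore fails, and the conditional is established.

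There is essentially no difficult step in this argument; its only subtlety is conceptual rather than technical. One must recognize first that the conditional is proved by falsifying its premise, and second that soundness is precisely the bridge that lifts the previously established \emph{semantic} satisfiability of an unknowable truth to the \emph{syntactic} claim that $\neg U_i\bullet_i p$ is not a theorem. Everything beyond that is an immediate appeal to results already in hand, which is exactly why this syntactic route to Fitch's paradox is so much shorter than the usual derivations.
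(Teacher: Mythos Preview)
Your proof is correct, but it takes a different route from the paper's. The paper argues \emph{directly}: assuming $\vdash\neg U_i\phi$ for all $\phi$, it instantiates with $\bullet_i\phi$ to obtain $\vdash\neg U_i\bullet_i\phi$; then it invokes \emph{completeness} (Thm.~\ref{thm.comp-slut}) to turn the validity of Prop.~\ref{prop.unknowntruthsareunknowable} into the theorem $\vdash\bullet_i\phi\to U_i\bullet_i\phi$, and concludes $\vdash\neg\bullet_i\phi$ by modus tollens. You instead refute the antecedent, exactly mirroring the paper's earlier \emph{semantic} version of the proposition, and you use \emph{soundness} rather than completeness as the bridge. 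Both are valid; the trade-off is that the paper's direct argument displays the Fitch-style inference itself (the hypothesis is actually used to derive the conclusion), which is more in keeping with the advertised goal of giving ``a syntactic proof for Fitch's paradox of knowability'', whereas your vacuous argument is slightly lighter in that it needs only soundness, not the full completeness theorem.
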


\begin{proof}
    Suppose that $\vdash\neg U_i\phi$ for all $\phi$, then $\vdash \neg U_i{\bullet_i\phi}$. Then by Prop.~\ref{prop.unknowntruthsareunknowable} and Thm.~\ref{thm.comp-slut}, $\vdash \bullet_i\phi\to U_i{\bullet_i\phi}$, and therefore $\vdash \neg{\bullet_i\phi}$ for all $\phi$.
\end{proof}

\section{Conclusion and Discussion}\label{sec.conclusion}

In this paper, we proposed a logic of unknowable truths, investigated the logical properties of unknown truths and the logical properties of unknowable truths, which includes the similarities and relationship between the two notions, and finally axiomatized the logic of unknowable truths.

So far we have seen that the semantics of unknowable truths depends on that of propositional knowledge. However, we have mainly focused on the relationship between unknowable truths and unknown truths, instead of between unknowable truths and propositional knowledge, and the notion of unknown truths is weaker than that of propositional knowledge. So there seems to be some asymmetry between the semantics of unknowable truths and our concerns. 
Then a natural question is: is there any other semantics for unknowable truths which relates the notion to that of unknown truths more directly/properly? Our answer is positive. We will introduce this new semantics in the future work.

For another future work, a natural extension would be to add propositional quantifiers. This can increase the expressive power of the current logic. For instance, we can express Fitch's paradox of knowability in the new language as follows: $\forall p\neg U_ip\to \forall p\neg {\bullet_ip}$, or equivalently, $\exists p{\bullet_ip}\to \exists pU_ip$.

\weg{\section{An alternative semantics for unknowable truths}

A proposition is an {\em unknowable truth}, if it is an unknown truth (namely, it is true but not known), and after every truthfully public announcement, it is still an unknown truth. Since one can announce the constant truth, the definition can be simplified as follows: a proposition is an unknowable truth, if after every truthfully public announcement, it is an unknown truth.
\[
\begin{array}{lll}
\M,s\Vdash U_i\phi&\iff & \M,s\Vdash\bullet_i\phi\text{ and for all }\psi\in\mathcal{L}(\bullet),\M,s\Vdash {[\psi]\bullet_i}\phi\\
&\iff & \text{for all }\psi\in\mathcal{L}(\bullet),\M,s\Vdash {[\psi]\bullet_i}\phi\\
\end{array}
\]
where $\mathcal{L}(\bullet)$ is the language of the logic of unknown truths, defined recursively as follows.
\[
\begin{array}{lll}
\mathcal{L}(\bullet) & \phi::= & p\mid \neg\phi\mid (\phi\land\phi)\mid \bullet_i\phi\\
\end{array}
\]

Note that the (very natural) alternative semantics of unknowable truths does not depend on the notion of propositional knowledge, thus it relates the notion of unknowable truths to the notion of unknown truths more directly and seemingly more suitably. Due to this, we can define an alternative language of the logic of unknowable truths as follows:
\[
\begin{array}{lll}
\ALUT    &\phi::= & p\mid \neg\phi\mid (\phi\land\phi)\mid \bullet_i\phi\mid [\phi]\phi\mid U_i\phi \\
\end{array}
\]

Note that the alternative semantics is stronger than the previous one, in the sense that if $\M,s\Vdash U_i\phi$, then $\M,s\vDash U_i\phi$.

\[
\begin{array}{|l|l|l|}
  \hline
  &\vDash&\Vdash\\
\hline
  &U_i\phi\to\phi & U_i\phi\to\phi \\
  &U_i(\phi\to\psi)\to (U_i\phi\to U_i\psi) & U_i(\phi\to\psi)\to (U_i\phi\to U_i\psi)\\
 & U_i\phi\land U_i\psi\to U_i(\phi\land\psi)  & U_i\phi\land U_i\psi\to U_i(\phi\land\psi)\\
 & U_i(\phi\land\psi)\to U_i\phi\lor U_i\psi & U_i(\phi\land\psi)\to U_i\phi\lor U_i\psi\\
 & U_i\phi\to U_iU_i\phi & U_i\phi\to U_iU_i\phi\\
\text{validities} &U_i\phi\leftrightarrow U_iU_i\phi & U_i\phi\leftrightarrow U_iU_i\phi\\
 & \neg\K_iU_i\phi &  \\
 & \neg U_i\K_i\phi~ (\text{transitive}) & \text{undefinability of }\K \\
 & \neg U_i\neg \K_i\phi ~(\text{Euclidean}) & \\
  \hline
  \hline
 \text{invalidities} &
 \neg U_i\phi\to U_i\neg U_i\phi & \neg U_i\phi\to U_i\neg U_i\phi\\
  \hline
\end{array}
\]

\begin{proposition}
    $\Vdash U_i(\phi\vee\psi)\to U_i\phi\vee U_i\psi$.
\end{proposition}

\begin{proof}
Let $\M=\lr{S,\{R_i\mid i\in\Ag\},V}$ be a model. Assume that $\M,s\vDash U_i(\phi\vee\psi)$, we need to show that $\M,s\vDash U_i\phi\vee U_i\psi$.

By assumption, we have that for all $\chi\in\mathcal{L}(\bullet)$, $\M,s\Vdash [\chi]{\bullet_i(\phi\vee\psi)}$, thus $\M,s\Vdash \chi$ implies $\M|_\chi,s\Vdash \bullet_i(\phi\vee\psi)$. Recall that $\Vdash \bullet_i(\phi\vee\psi)\to \bullet_i\phi\vee\bullet_i\psi$. We then obtain that $\M,s\Vdash \chi$ implies either $\M|_\chi,s\Vdash \bullet_i\phi$ or $\M|_\chi,s\Vdash \bullet_i\psi$.
\end{proof}

\begin{proposition}
Every unknowable truth is an unknown truth. In symbol, $\Vdash U_i\phi\to\bullet_i\phi$.
\end{proposition}

Recall that under the previous semantics, unknown truths are themselves unknowable truths (Prop.~\ref{prop.unknowntruthsareunknowable}), which seems kind of counterintuitive. In contrast, we may avoid this issue under the alternative semantics.
\begin{proposition}
Unknown truths are {\em not} themselves unknowable truths. In symbol, $\nVdash \bullet_i\phi\to {U_i\bullet_i}\phi$.
\end{proposition}

\begin{proof}
We show that $\nVdash \bullet_ip\to {U_i\bullet_i}p$. Consider the following model:
\[
\xymatrix{\M&s:p\ar[rr]|i\ar@(ul,ur)|i&& t:\neg p\ar@(ur,ul)|i}
\]

It is straightforward to see that $\M,s\Vdash\bullet_ip$. It suffices to show that $\M,s\nVdash U_i{\bullet_ip}$. To see this, note that $\M|_p,s\nVdash \bullet_ip$, and thus $\M|_p,s\nVdash \bullet_i{\bullet_ip}$. This entails that $\M,s\nVdash [p]{\bullet_i\bullet_i}p$, and therefore $\M,s\nVdash U_i{\bullet_ip}$, as desired.
\end{proof}

In contrast to Prop.~\ref{prop.notunknowablebullettounknowable}, we have the following
\begin{proposition}\label{prop.uequaltoui}
    $\Vdash U_i\phi\leftrightarrow U_i{\bullet_i\phi}$.
\end{proposition}

\begin{proof}
Because for the case $\bullet\phi$, the alternative semantics is the same as the previous one, we have that $\Vdash \bullet_i\phi\leftrightarrow\bullet_i{\bullet_i\phi}$. Now let $\M=\lr{S,R,V}$ be a model and $s\in S$. We have
\[\begin{array}{ll}
     & \M,s\Vdash U_i\phi \\
    \iff  &\text{for all }\psi\in\mathcal{L}(\bullet), \M,s\Vdash[\psi]{\bullet_i\phi}\\
    \iff & \text{for all }\psi\in\mathcal{L}(\bullet), \M,s\Vdash[\psi]{\bullet_i\bullet_i}\phi\\
    \iff & \M,s\Vdash U_i{\bullet_i\phi}.\\
\end{array}\]
\end{proof}

\begin{proposition}
Unknowable truths are themselves unknown truths. In symbol, $\Vdash U_i\phi\to {\bullet_iU_i}\phi$. (As a corollary, $\Vdash U_i\phi\leftrightarrow {\bullet_iU_i}\phi$ and thus $\Vdash U_i{\bullet_i\phi}\leftrightarrow \bullet_i {U_i\phi}$.)
\end{proposition}

\begin{proof}
Suppose for any model $\M=\lr{S,R,V}$ and any state $s\in S$ we have $\M,s\Vdash U_i\phi$, to show that $\M,s\Vdash \bullet_iU_i\phi$. It suffices to prove that for some $t$, $sR_it$ and $\M,t\nVdash U_i\phi$.

If not, then for all $t$ such that $sR_it$, we have $\M,t\Vdash U_i\phi$, thus for all $\psi\in\mathcal{L}(\bullet)$, $\M,t\Vdash [\psi]{\bullet_i\phi}$, and then $\M,t\Vdash \bullet_i\phi$, which implies that $\M,t\Vdash\phi$ and for some $u$ such that $tR_iu$ and $\M,u\nVdash\phi$. By letting $t$ be $s$, since $sR_is$, we have that for some $u$ such that $sR_iu$ and $\M,u\nVdash \phi$, which contradicts the fact that for all $t$ such that $sR_it$, $\M,t\Vdash\phi$.
\end{proof}

\begin{proposition}
(Fitch's paradox of knowability) if all truths are knowable, then all truths are actually known. In symbol, if $\Vdash \neg U_i\phi$ for all $\phi$, then $\Vdash\neg {\bullet_i\phi}$ for all $\phi$.
\end{proposition}

\begin{proof}
It suffices to show that $U_i\phi$ is satisfiable for some $\phi$.

Q: is it possible to show that $\Vdash\neg U_i\phi$ for all $\phi$. If so, we can avoid Fitch's paradox under the alternative semantics. So exciting!
\end{proof}

\begin{proposition}
$\Vdash[U_ip]\neg U_ip$.
\end{proposition}

\begin{proof}
    Suppose that $\M,s\Vdash U_ip$, to show that $\M|_{U_ip},s\nVdash U_ip$. By supposition, $\M,s\Vdash p$. Thus $\M|_{U_ip}\subseteq \M|_p$, and then $\M|_{U_ip}|_{U_ip}\subseteq \M|_p|_{U_ip}$. Since $\M|_p,s\nVdash \bullet_ip$, we obtain that $\M|_p,s\nVdash U_ip$, that is, $s\notin \M|_p|_{U_ip}$. It then follows that $s\notin \M|_{U_ip}|_{U_ip}$, and therefore $\M|_{U_ip},s\nVdash U_ip$.
\end{proof}

Find a fragment that is not unknowable. In other words, find a set of formulas $\phi$ such that $\Vdash\neg U_i\phi$.

\begin{conjecture}
For all $\phi\in\ALUT$, for all $i\in\Ag$, we have $\Vdash \neg U_i\phi$.
\end{conjecture}

\begin{proof}
Suppose, for a contradiction, that there exists a pointed model $(\M,s)$ such that $\M,s\Vdash U_i\phi$. Then for all $\psi\in\mathcal{L}(\bullet)$, $\M,s\Vdash[\psi]{\bullet_i\phi}$. We proceed by induction on $\phi$.
\begin{itemize}
    \item $\phi=p\in\BP$. It is straightforward to show that $\M,s\Vdash\lr{p}\neg{\bullet_ip}$. Contradiction.
    \item $\phi=\neg\chi$. By induction hypothesis, we have $\M,s\Vdash U_i\chi$. Then $\M,s\nVdash U_i\phi$. Contradiction.
    \item $\phi=\phi_1\land\phi_2$.
\end{itemize}
\end{proof}

\begin{proposition}
For all $\phi\in$, for all $i\in\Ag$, we have $\Vdash\neg U_i\phi$.
\end{proposition}

\begin{proof}
By induction on .
\begin{itemize}
    \item $\phi=p\in\BP$. If not, then there is a pointed model $(\M,s)$ such that $\M,s\Vdash U_ip$. Then for all $\psi\in\mathcal{L}(\bullet)$, $\M,s\Vdash[\psi]{\bullet_ip}$. Then $\M,s\Vdash p$ and $\M,s\Vdash [p]{\bullet_ip}$, and thus $\M|_p,s\Vdash \bullet_ip$, a contradiction.
    \item $\phi=\neg p$, where $p\in\BP$. The proof is analogous.
    \item $\phi=\phi_1\land\phi_2$. By induction hypothesis, $\Vdash\neg U_i\phi_1$ and $\Vdash \neg U_i\phi_2$. By the fact that $\Vdash U_i(\phi_1\land\phi_2)\to U_i\phi_1\lor U_i\phi_2$, we infer that $\Vdash\neg U_i(\phi_1\land\phi_2)$.
    \item $\phi=\phi_1\vee \phi_2$. By induction hypothesis, $\Vdash \neg U_i\phi_1$ and $\Vdash\neg U_i\phi_2$. By the fact that $\Vdash U_i(\phi_1\lor \phi_2)\to U_i\phi_1\vee U_i\phi_2$
    \item $\phi=\bullet_i\chi$. By induction hypothesis, $\Vdash \neg U_i\chi$. By Prop.~\ref{prop.uequaltoui}, we derive that $\Vdash \neg U_i{\bullet_i\chi}$.
    \item $\phi=U_i\chi$. By induction hypothesis, $\Vdash \neg U_i\chi$. By the fact that $\Vdash U_i\chi\lra U_iU_i\chi$, we obtain that $\Vdash \neg U_iU_i\chi$.
    \item $\phi=[\neg\chi_1]\chi_2$.
\end{itemize}
\end{proof}}

\bibliographystyle{plain}
\bibliography{biblio2023}

\end{document}